\documentclass{article}


\usepackage[letterpaper, left=1in, right=1in, top=1in, bottom=1in]{geometry}
\usepackage{paralist}
\usepackage{xparse}

\usepackage[dvipsnames]{xcolor}
\usepackage[colorlinks=true, linkcolor=blue!70!black, citecolor=blue!70!black,urlcolor=black,breaklinks=true]{hyperref}
\usepackage{microtype}

\usepackage{algorithm}

 \usepackage{natbib}
 \bibliographystyle{plainnat}
 \bibpunct{(}{)}{;}{a}{,}{,}

\usepackage{amsthm}
\usepackage{mathtools}
\usepackage{amsmath}
\usepackage{bbm}
\usepackage{amsfonts}
\usepackage{amssymb}

\usepackage{xpatch}

\theoremstyle{definition}  %

\newtheorem{claim}{Claim}
\newtheorem{lemma}{Lemma}

\newtheorem{corollary}{Corollary}

\newtheorem{assumption}{Assumption}

\theoremstyle{plain}
\newtheorem{remark}{Remark}
\newtheorem{example}{Example}
\newtheorem{theorem}{Theorem}
\newtheorem{definition}{Definition}

\numberwithin{claim}{section}
\numberwithin{fact}{section}
\numberwithin{lemma}{section}
\numberwithin{proposition}{section}
\numberwithin{theorem}{section}
\numberwithin{corollary}{section}
\numberwithin{definition}{section}

\xpatchcmd{\proof}{\itshape}{\normalfont\proofnameformat}{}{}
\newcommand{\proofnameformat}{\bfseries}

\makeatletter
\newcommand{\neutralize}[1]{\expandafter\let\csname c@#1\endcsname\count@}
\makeatother

\usepackage{prettyref}

\newcommand{\savehyperref}[2]{\texorpdfstring{\hyperref[#1]{#2}}{#2}}
\newrefformat{eq}{\savehyperref{#1}{\textup{(\ref*{#1})}}}
\newrefformat{eqn}{\savehyperref{#1}{Equation~\ref*{#1}}}
\newrefformat{lem}{\savehyperref{#1}{Lemma~\ref*{#1}}}
\newrefformat{def}{\savehyperref{#1}{Definition~\ref*{#1}}}
\newrefformat{line}{\savehyperref{#1}{line~\ref*{#1}}}
\newrefformat{thm}{\savehyperref{#1}{Theorem~\ref*{#1}}}
\newrefformat{corr}{\savehyperref{#1}{Corollary~\ref*{#1}}}
\newrefformat{cor}{\savehyperref{#1}{Corollary~\ref*{#1}}}
\newrefformat{sec}{\savehyperref{#1}{Section~\ref*{#1}}}
\newrefformat{app}{\savehyperref{#1}{Appendix~\ref*{#1}}}
\newrefformat{ass}{\savehyperref{#1}{Assumption~\ref*{#1}}}
\newrefformat{asm}{\savehyperref{#1}{Assumption~\ref*{#1}}}
\newrefformat{ex}{\savehyperref{#1}{Example~\ref*{#1}}}
\newrefformat{fig}{\savehyperref{#1}{Figure~\ref*{#1}}}
\newrefformat{alg}{\savehyperref{#1}{Algorithm~\ref*{#1}}}
\newrefformat{rem}{\savehyperref{#1}{Remark~\ref*{#1}}}
\newrefformat{conj}{\savehyperref{#1}{Conjecture~\ref*{#1}}}
\newrefformat{prop}{\savehyperref{#1}{Proposition~\ref*{#1}}}
\newrefformat{proto}{\savehyperref{#1}{Protocol~\ref*{#1}}}
\newrefformat{prob}{\savehyperref{#1}{Problem~\ref*{#1}}}
\newrefformat{claim}{\savehyperref{#1}{Claim~\ref*{#1}}}
\newrefformat{op}{\savehyperref{#1}{Open Problem~\ref*{#1}}}
\newrefformat{fact}{\savehyperref{#1}{Fact~\ref*{#1}}}

\newrefformat{prog}{\savehyperref{#1}{\textup{(\ref*{#1})}}}

\usepackage[capitalize,noabbrev]{cleveref}
\crefname{claim}{claim}{claims}
\crefname{fact}{fact}{facts}
\crefname{line}{Lines}{lines}

\usepackage{multicol}
\usepackage{bm}

\usepackage{multirow,array}



















\def\1{\bm{1}}








\def\va{{\bm{a}}}

\def\ve{{\bm{e}}}

\def\vr{{\bm{r}}}

\def\vw{{\bm{w}}}
\def\vx{{\bm{x}}}

\newcommand{\calA}{\ensuremath{\mathcal{A}}}

\newcommand{\calE}{\ensuremath{\mathcal{E}}}

\newcommand{\calG}{\ensuremath{\mathcal{G}}}

\newcommand{\calS}{\ensuremath{\mathcal{S}}}

\newcommand{\calV}{\ensuremath{\mathcal{V}}}






\let\E\undefined
\let\R\undefined

\newcommand{\E}{\mathbb{E}}

\newcommand{\R}{\mathbb{R}}







\renewcommand{\bar}[1]{\overline{#1}}



\usepackage[short,nocomma]{optidef}





\newcommand{\vpi}{\boldsymbol{\pi} }
\newcommand{\vrho}{\boldsymbol{\rho} }
\newcommand{\vsigma}{\boldsymbol{\sigma} }

\DeclareMathOperator*{\argmax}{arg\,max}

\DeclareMathOperator{\pr}{\mathbb{P}}
\newcommand{\defeq}{\coloneqq}

\input{macros/nlp-macros}




\usepackage[utf8]{inputenc} 
\usepackage[T1]{fontenc}    
\usepackage{hyperref}       
\hypersetup{
colorlinks=true,
linktocpage=true,
pdfstartview=FitH,
breaklinks=true,
pdfpagemode=UseNone,
pageanchor=true,
pdfpagemode=UseOutlines,
plainpages=false,
bookmarksnumbered,
bookmarksopen=false,
bookmarksopenlevel=1,
hypertexnames=true,
pdfhighlight=/O,
pdftitle={},
pdfauthor={},
pdfsubject={},
pdfkeywords={},
pdfcreator={pdfLaTeX},
pdfproducer={LaTeX with hyperref}
}

\usepackage{url}            
\usepackage{booktabs}       
\usepackage{amsfonts}       
\usepackage{nicefrac}       
\usepackage{microtype}      
\usepackage{xcolor}         
\usepackage{tikz}
\usetikzlibrary{positioning}
\usepackage{makecell}
\usepackage{blkarray}
\usepackage{autonum}
\usepackage{enumitem}

\newcommand{\neprogram}{\ensuremath{\mathrm{P}_{\mathrm{NE}}}}
\newcommand{\brprogram}{\ensuremath{\mathrm{P}_{\mathrm{BR}}}}

\newcommand{\ctrl}{\mathrm{ctrl}}
\newcommand{\argctrl}{\mathrm{argctrl}}
\newcommand{\adj}{\mathrm{adj}}

\crefname{assumption}{assumption}{assumptions}
\crefname{claim}{claim}{claims}
\crefname{lemma}{lemma}{lemmas}

\title{Zero-sum Polymatrix Markov Games:
Equilibrium Collapse and Efficient Computation of Nash Equilibria}

\date{}

%

\author{%
  Fivos Kalogiannis \\
  University of California, Irvine\\
  \texttt{fkalogia@uci.edu}
    \and
  Ioannis Panageas \\
  University of California, Irvine\\
  \texttt{ipanagea@ics.uci.edu}
}

\begin{document}

\maketitle

\begin{abstract} The works of \citep{daskalakis2009complexity, daskalakis2022complexity, Sid22, deng2023complexity} indicate that computing Nash equilibria in multi-player Markov games is a computationally hard task. This fact raises the question of whether or not computational intractability can be circumvented if one focuses on specific classes of Markov games. One such example is two-player zero-sum Markov games, in which efficient ways to compute a Nash equilibrium are known. Inspired by zero-sum polymatrix normal-form games \citep{cai2016zero}, we define a class of zero-sum multi-agent Markov games in which there are only pairwise interactions described by a graph that changes per state.
For this class of Markov games, we show that an $\epsilon$-approximate Nash equilibrium can be found efficiently. To do so, we generalize the techniques of \citep{cai2016zero}, by showing that the set of coarse-correlated equilibria collapses to the set of Nash equilibria. Afterwards, it is possible to use any algorithm in the literature that computes approximate coarse-correlated equilibria Markovian policies to get an approximate Nash equilibrium.   
\end{abstract}

\pagenumbering{gobble}
\clearpage
\tableofcontents
\clearpage
\pagenumbering{arabic}

\section{Introduction}
\label{introduction}

Multi-agent reinforcement learning (MARL) is a discipline that is concerned with strategic interactions between agents who find themselves in a dynamically changing environment. Early aspects of MARL can be traced in the literature of two-player zero-sum stochastic/Markov games \cite{shapley1953stochastic}. Today Markov games have been established as the theoretical framework for MARL~\citep{littman1994markov}. The connection between game theory and MARL has 
lead to several recent cornerstone results in benchmark domains in AI~\citep{Bowling15:Limit,Brown19:Superhuman,Brown18:Superhuman,Brown20:Combining,Silver17:Mastering,Mora17:DeepStack,Perolat22:Mastering,Vinyals19:Grandmaster}. The majority of the aforementioned breakthroughs relied on computing \emph{Nash equilibria}~\citep{nash1951non} in a scalable and often decentralized manner. Although the theory of single agent reinforcement learning (RL) has witnessed an outstanding progress (\textit{e.g.}, see \citep{Agarwal20:Optimality,ber00, Jin18:Is, Li21:Settling, Luo19:Algorithmic,pan05, Sidford18:Near, sut18}, and references therein), the landscape of multi-agent settings eludes a thorough understanding. In fact, guarantees for provably efficient computation of Nash equilibria remain limited to
either environments in which agents strive to coordinate towards a shared goal~\citep{Chen22:Convergence,Claus98:The,ding2022independent,Fox22:Independent,leonardos2021global,Maheshari22:Independent,Wang02:Reinforcement,zhang2021gradient} or
fully competitive such as two-player zero-sum games~\citep{Cen21:Fast,Condon93:On,daskalakis2020independent,Sayin21:Decentralized,Sayin20:Fictitious,Wei21:Last} to name a few. Part of the lack of efficient algorithmic results in MARL is the fact that computing approximate Nash equilibria in (general-sum) games is computationally intractable~\citep{daskalakis2009complexity,Rubinstein17:Settling,Chen09:Settling,Etessami10:On} even when the games have a single state, \textit{i.e.}, normal-form two-player games.

We aim at providing a theoretical framework that captures an array of  real-world applications that feature both shared and competing interests between the agents --- which admittedly correspond to a big portion of all modern applications. A recent contribution that computes NE efficiently in a setting that combines both collaboration and competition, \citep{kalogiannis2022efficiently}, concerns adversarial team Markov games, or competition between an adversary and a group of uncoordinated agents with common rewards. Efficient algorithms for computing Nash equilibria in settings that include both cooperation and competition are far fewer and tend to impose assumptions that are restrictive and difficult to meet in most applications~\citep{Bowling00:Convergence,Hu03:Nash}. The focus of our work is centered around the following question: 
\begin{equation}
    \parbox{34em}{\centering\textit{Are there any other settings of Markov games that encompass both competition and coordination while mantaining the tractability of Nash equilibrium computation?}}\tag{$\star$}
\end{equation}

Inspired by recent advances in algorithmic game theory and specifically zero-sum \textit{polymatrix} normal-form games \citep{cai2016zero}, we focus on the problem of computing Nash equilibria in zero-sum \textit{polymatrix Markov} games. Informally, a polymatrix Markov game is a multi-agent Markov decision process with $n$ agents, state-space $\calS$, action space $\calA_k$ for agent $k$, a transition probability model $\pr$ and is characterized by a graph $\calG_s(\calV,\calE_s)$ which is potentially different in every state $s$. For a fixed state $s$, the nodes of the graph $\calV$ correspond to the agents, and the edges $\calE_s$ of the graph are two-player normal-form games (different per state). Every node/agent $k$ has a fixed set of actions $\calA_k$, and chooses a strategy from this set to play in all games corresponding to adjacent edges. Given an action profile of all the players, the node’s reward is the sum of its rewards in all games on the edges adjacent to it. The game is globally zero-sum if, for all strategy profiles, the rewards of all players add up to zero. Afterwards, the process transitions to a state $s'$ according to $\pr$. In a more high-level description, the agents interact over a network whose connections change at every state.

\paragraph{Our results.} We consider a zero-sum polymatrix Markov game with the additional property that a single agent (not necessarily the same) that controls the transition at each state, \textit{i.e.}, the transition model is affected by a single agent's actions for each state $s$. These games are known as \textit{switching controller} Markov games. We show that we can compute in time $\textrm{poly}(|\calS|,n,\max_{i\in [n]}|\calA_i|, 1/\epsilon)$ an $\epsilon$-approximate Nash equilibrium. The proof relies on the fact that zero-sum polymatrix Markov games with a switching controller have the following important property: the marginals of a coarse-correlated equilibrium constitute a Nash equilibrium (see Section \ref{sec:collapsing}). We refer to this phenomenon as \textit{equilibrium collapse}.
This property was already known for zero-sum polymatrix normal-form games by \cite{cai2016zero} and our results generalize the aforementioned work for Markov games. As a corollary, we get that any algorithm in the literature that guarantees convergence to approximate coarse-correlated equilibria Markovian policies---\textit{e.g.}, \citep{ daskalakis2022complexity}---can be used to get approximate Nash equilibria. Our contribution also unifies previous results that where otherwise only applicable to the settings of \textit{single} and \textit{switching-control two-player zero-sum games}, or \textit{zero-sum polymatrix normal-form games}. Finally, we show that the equilibrium collapsing phenomenon does not carry over if there are two or more controllers per state (see Section \ref{sec:fails}). An additional factor that aggravates the challenge

\paragraph{Technical overview.} In order to prove our results, we rely on nonlinear programming and, in particular, nonlinear programs whose optima coincide with the Nash equilibria for a particular Markov game~\citep{filar1991nonlinear,filar2012competitive}. Our approach is analogous to the one used by \citep{cai2016zero} which uses linear programming to prove the collapse of the set of CCE to the set of NE. Nevertheless, using the duality of linear programming in our case is not possible since a Markov game introduces nonlinear terms in the program. It is noteworthy that we do not need to invoke (Lagrangian) duality or an argument that relies on stationary points of a Lagrangian function. Rather, we use the structure of the zero-sum polymatrix Markov games with a switching controller to conclude the relation between a correlated policy and the individual policies formed by its marginals in terms of the individual utilities of the game. 

\subsection{Importance of zero-sum polymatrix Markov games}

Strategic interactions of agents over a network is a topic of research in multiple disciplines that span computer science~\citep{easley2010networks}, economics~\citep{schweitzer2009economic}, control theory~\citep{tipsuwan2003control}, and biology~\citep{szabo2007evolutionary} to name a few.


In many environments where multiple agents interact with each other, they do so in a localized manner. That is, every agent is affected by the set of agents that belong to their immediate ``neighborhood''. Further, it is quite common that these agents will interact independently with each one of their neighbors; meaning that the outcome of their total interactions is a sum of pairwise interactions rather than interactions that depend on joint actions. Finally, players might remain indifferent to actions of players are not their neighbors. 

To illustrate this phenomenon we can think of multiplayer e-games (\textit{e.g.}, CS:GO, Fortnite, League of Legends, etc) where each player interacts through the same move only with players that are present on their premises and, in general, the neighbors cannot combine their actions into something that is not a mere sum of their individual actions (\textit{i.e.,} they rarely can ``multiply'' the effect of the individual actions). In other scenarios, such as strategic games played on social networks (\textit{e.g.}, opinion dynamics) agents clearly interact in a pairwise manner with agents that belong to their neighborhood and are somewhat oblivious to the actions of agents who they do not share a connection with.

With the proposed model we provide the theoretical framework needed to reason about such strategic interactions over dynamically changing networks. 

\subsection{Related work}

From the literature of Markov games, we recognize the settings of \textit{single controller}~\citep{filar1984matrix,sayin2022fictitious,guan2016regret,qiu2021provably} and \textit{switching controller}~\citep{vrieze1983finite} Markov games to be one of the most related to ours. In these settings, all agents' actions affect individual rewards, but in every state one particular player (\textit{single controller}), or respectively a potentially different one (\textit{switching controller}), controls the transition of the environment to a new state. To the best of our knowledge, prior to our work, the only Markov games that have been examined under this assumption are either zero-sum or potential games.

Further, we manage to go beyond the dichotomy of absolute competition or absolute collaboration by generalizing zero-sum polymatrix games to their Markovian counterpart. In this sense, our work is related to previous works of \cite{cai2016zero, Ana22, asychronous} which show fast convergence to Nash equilibria in zero-sum polymatrix normal-form games for various no-regret learning algorithms including optimistic gradient descent.
\section{Preliminaries}
\label{preliminaries}
\paragraph{Notation.} We define $[n]\defeq \{1, \cdots, n\}$. Scalars are denoted using lightface variables, while, we use boldface for vectors and matrices.  For simplicity in the exposition, we use $O(\cdot)$ to suppress dependencies that are polynomial in the parameters of the game. Additionally, given a collection $\vx$ of policies or strategies for players $[n]$, $\vx_{-k}$ denotes the policies of every player excluding $k$.

\subsection{Markov games}
In its most general form, a Markov game (MG) with a finite number of $n$ players is defined as a tuple $\Gamma( H, \calS, \{ \calA_k \}_{k \in [n]}, \pr, \{ r_k\}_{k\in[n]}, \gamma, \vrho )$. Namely,
\begin{itemize}
    \item $H \in \mathbb{N}_{+}$ denotes the \textit{time horizon}, or the length of each episode,
    \item $\calS$, with cardinality $S\defeq |\calS|$, stands for the state space,
    \item $\{ \calA_k \}_{k \in [n]}$ is the collection of every player's action space, while $\calA\defeq \calA_1\times \cdots \times \calA_n$ denotes the \textit{joint action space}; further, an element of that set ---a joint action--- is generally noted as $\va = (a_1, \dots, a_n) \in \calA$,
    \item $\pr\defeq \{ \pr_h\}_{h\in[H]}$ is the set of all \textit{transition matrices}, with $\pr_h : \calS \times \calA \to \Delta(\calS)$; further, $\pr_h(\cdot|s, \va)$ marks the probability of transitioning to every state given that the joint action $\va$ is selected at time $h$ and state $s$ --- in infinite-horizon games $\pr$ does not depend on $h$ and the index is dropped,
    \item $r_k \defeq \{ r_{k,h}\}$ is the reward function of player $k$ at time $h$; $r_{k,h}:\calS, \calA \to [-1,1]$ yields the reward of player $k$ at a given state and joint action  --- in infinite-horizon games, $r_{k,h}$ is the same for every $h$ and the index is dropped,
    \item a discount factor $\gamma>0$, which is generally set to $1$ when $H < \infty$, and $\gamma<1$ when $H\to \infty$,
    \item an initial state distribution $\vrho \in \Delta(\calS)$.
\end{itemize}

It is noteworthy that without placing any structure on $\{ r_k\}_{k\in[n]}$, an MG encompasses general interactions, with both \textit{cooperation} and \textit{competition}.

\paragraph{Policies and value functions.}
We will define stationary and nonstationary Markov policies. When the horizon $H$ is finite, a stationary policy equilibrium need not necessarily exist even for a single-agent MG, \textit{i.e.}, a Markov decision process; in this case, we seek nonstationary policies. For the case of infinite-horizon games, it is folklore that a stationary Markov policy Nash equilibrium always exists.

We note that a policy is \textit{Markovian} when it depends on the present state only. A \textit{nonstationary} Markov policy $\vpi_k$ for player $k$ is defined as $\vpi_k \defeq \{ \vpi_{k,h} : \calS \to \Delta(\calA_k) ,~\forall h \in [H]\}$. It is a sequence of mappings of states $s$ to a distribution over actions $\Delta(\calA_k)$ for every timestep $h$. By $\vpi_{k,h}(a|s)$ we will denote the probability of player $k$ taking action $a$ in timestep $h$ and state $s$. A Markov policy is said to be \textit{stationary} in the case that it outputs an identical probability distribution over actions whenever a particular state is visited regardless of the corresponding timestep $h$. 


Further, we define a nonstationary Markov \textit{joint policy} $\vsigma \defeq \{ \vpi_{h}, ~\forall h \in [H] \}$ to be a sequence of mappings from states to distributions over joint actions $\Delta( \calA ) \equiv \Delta( \calA_1 \times \cdots \times \calA_n)$ for all times steps $h$ in the time horizon. In this case, the players can be said to share a common source of randomness, or that the joint policy is correlated.

A joint policy $\vpi$ will be said to be a \textit{product policy} if there exist policies $\vpi_k : [H]\times\calS\to \Delta(\calA_k), ~\forall k\in[n]$ such that $\vpi_{h} = \vpi_{1,h} \times \cdots \times \vpi_{n,h},~\forall h \in [H]$. Moreover, given a joint policy $\vpi$ we let a joint policy $\vpi_{-k}$ stand for the \textit{marginal joint policy} excluding player $k$, \textit{i.e.}, 
\begin{equation}
    \pi_{-k,h}(\va|s) = \sum_{a'\in \calA_k} \pi_h(a', \va|s), ~\forall h \in[H], \forall s \in \calS, \forall \va \in \calA_{-k}.
\end{equation}


By fixing a joint policy $\vpi$ we can define the value function of any given state $s$ and timestep $h$ for every player $k$ as the expected cumulative reward they get from that state and timestep $h$ onward,
\begin{align}
    \displaystyle
    V_{k,h}^{\vpi}(s_1) &= \E_{\vpi} \left[ \sum_{h=1}^H \gamma^{h-1} r_{k,h}(s_h, \va_h) \big| s_1 \right] = \ve_{s_1}^\top \sum_{h=1}^H  \left( \gamma^{h-1} \prod_{\tau = 1}^h \pr_\tau(\vpi_{\tau}) \right) \vr_{k,h}(\vpi_{h}).
\end{align}
Depending on whether the game is of finite or infinite horizon we get the followin displays,
\begin{multicols}{2}
\begin{itemize}[leftmargin=2em,rightmargin=2em]
    \item 
    In finite-horizon games, $\gamma=1$, the value function reads,
\item In infinite-horizon games, the value function of each state is,
\end{itemize}
\end{multicols}
\vspace{-3em}
\begin{align}
    \displaystyle
    V_{k,h}^{\vpi}(s_1) = \ve_{s_1}^\top     \displaystyle
\sum_{h=1}^H  \left (  \prod_{\tau = 1}^h  \pr_{\tau}(\vpi_{\tau}) \right) \vr_{k,h}(\vpi_{h}),
    & \quad\quad\quad
    V_k^{\vpi}(s_1) = \ve_{s_1}^\top \left( \mathbf{I} - \gamma \pr(\vpi)\right)^{-1} \vr(\vpi).
\end{align}

Where $\pr_h(\vpi_h), \pr(\vpi)$ and $\vr_h(\vpi_h), \vr(\vpi)$ denote the state-to-state transition probability matrix and expected per-state reward vector for a given policy $\vpi_h$ or $\vpi$ accordingly. Additionally, $\ve_{s_1}$ is an all-zero vector apart of a value of $1$ in its $s_1$-th position. Also, we denote $V^{\vpi}_{k,h}(\vrho) = \sum_{s\in\calS} \rho(s) V^{\vpi}_{k,h}(s)$.

\paragraph{Best-response policies.}
\label{paragraph:best-response}
Given an arbitrary joint policy $\vsigma$, we define the \textit{best-response policy} of a player $k$ to be a policy $\vpi_k^\dagger \defeq \{\vpi_{k,h}^\dagger, ~\forall h \in [H]\}$, such that it is a maximizer of $\max_{\vpi_k'} V_{k,1}^{\vpi_k'\times \vsigma_{-k}}(s_1)$. Additionally, we will use the following notation $V^{\dagger,\vsigma_{-k}}_{k,h}( s ) \defeq \max_{\vpi_k'} V_{k,h}^{\vpi_k'\times \vsigma_{-k}}(s)$.

\paragraph{Equilibrium notions}
Having defined what a best-response is, it is then quite direct to define different notions of equilibria for Markov games.

\begin{definition}[CCE]
    We say that a joint (potentially correlated) policy $\vsigma \in \Delta(\calA)^{H\times S}$ is a an $\epsilon$-approximate coarse-correlated equilibrium if it holds that, for an $\epsilon>0$,
    \begin{equation}
        V^{\dagger,\vsigma_{-k}}_{k,1}(s_1) - V^{\vsigma}_{k,1}(s_1) \leq \epsilon, ~\forall k\in[n]. \tag{CCE}
    \end{equation}
\end{definition}

Further, we will define a Nash equilibrium policy,
\begin{definition}[NE]
    A joint, product policy $\vpi \in \prod_{k\in[n]} \Delta(\calA_k)^{H\times S}$ is an $\epsilon$-approximate Nash equilibrium if it holds that, for an $\epsilon>0$,
    \begin{equation}
        V^{\dagger,\vpi_{-k}}_{k,1}(s_1) - V^{\vpi}_{k,1}(s_1) \leq \epsilon, ~\forall k\in[n]. \tag{NE}
    \end{equation}
\end{definition}
It is quite evident that an approximate Nash equilibrium is also an approximate coarse-correlated equilibrium while the converse is not generally true. For infinite-horizon games the definitions are analogous.

\subsection{Our setting}
We focus on the setting of zero-sum polymatrix switching-control Markov games. This setting encompasses two major assumptions related to the reward functions in every state $\{ r_k\}_{k\in[n]}$ and the transition kernel $\pr$. The first assumption imposes a zero-sum, polymatrix structure on $\{ r_k\}_{k\in[n]}$ for every state and directly generalizes zero-sum polymatrix games for games with multiple states.

\begin{assumption}[Zero-sum polymatrix games]
    The reward functions of every player in any state $s$ are characterized by a \emph{zero-sum}, \emph{polymatrix} structure. 
    \label{assumption-poymatrix}
\end{assumption}

\paragraph{Polymatrix structure.} For every state $s$ there exists an undirected graph $\calG_s (\calV, \calE_s)$ where,
\begin{itemize}[leftmargin=2em, rightmargin=2em]
    \item the set of nodes $\calV$ coincides with the set of agents $[n]$; the $k$-th node is the $k$-th agent,
    \item the set of edges $\calE_s$ stands for the set of pair-wise interactions; each edge $e = (k,j),  k,j\in[n], k\neq j$ stands for a general-sum normal-form game played between players $k,j$ and which we note as $\Big(r_{kj}(s,\cdot,\cdot), r_{jk}(s,\cdot, \cdot)\Big)$ with $r_{kj},r_{jk}:\calS\times\calA_k\times\calA_j\to[-1,1]$.
\end{itemize}
Moreover, we define $\adj(s, k) \defeq \{ j \in [n]~|~ (k,j)\in \calE_s \} \subseteq {[n]}$ to be the set of all neighbors of an arbitrary agent $k$ in state $s$. The reward of agent $k$ at state $s$ given a joint action $\va$ depends solely on interactions with their neighbors,
\begin{equation}
    r_{k,h}(s,\va) = \sum_{j\in \adj(k)} r_{kj,h}(s,a_k,a_j), ~ \forall h \in [H], \forall s \in \calS, \forall \va \in \calA.
\end{equation}
Further, the \emph{zero-sum} assumption implies that,
\begin{equation}
    \sum_{k} r_{k,h}(s,\va) = 0, \quad \forall h \in [H], \forall s \in \calS, \forall \va \in \calA.
    \label{zero-sum}
\end{equation}
In the infinite-horizon setting, the subscript $h$ can be dropped.

A further assumption (\textit{switching-control}) is necessary in order to ensure the desirable property of equilibrium collapse.

\begin{assumption}[Switching-control]
    In every state $s\in \calS$, there exists a single player (not necessarily the same), or \textit{controller}, whose actions determine the probability of transitioning to a new state. 
    \label{assumption-single-contrl}
\end{assumption}
The function $\argctrl:\calS \to [n]$ returns the index of the player who controls the transition probability at a given state $s$. On the other hand, the function $\ctrl: \calS \times \calA \rightarrow \calA_{\argctrl(s)}$ gets an input of a joint action $\va$, for a particular state $s$, and returns the action of the controller of that state, $a_{\argctrl(s)}$.
\begin{remark}
    It is direct to see that Markov games with a single controller and turn-based Markov games~\citep{daskalakis2022complexity}, are special case of Markov games with switching controller. 
\end{remark}
\section{Main results}
In this section we provide the main results of this paper. We shall show the collapsing phenomenon of coarse-correlated equilibria to Nash equilibria in the case of zero-sum, single switching controller polymatrix Markov games. Before we proceed, we provide a formal definition of the notion of collapsing.

\begin{definition}[CCE collapse to NE] 
    Let $\vsigma$ be any $\epsilon$-CCE policy of a Markov game. Moreover, let the marginal policy $\vpi^{\vsigma} := (\vpi_1^{\vsigma},...,\vpi_n^{\vsigma})$ defined as:
    \begin{align}
        \pi_k^{\vsigma}(a|s) = \sum_{\va_{-k}\in\calA_{-k}} \sigma(a, \va_{-k} | s), ~\forall k, \forall s\in \calS, \forall a\in \calA_k.
    \end{align}
    If $\vpi^{\vsigma}$ is a $O(\epsilon)$-NE equilibrium for every $\vsigma$ then we say the set of approximate CCE's collapses to that of approximate NE's. 
\end{definition}

We start with the warm-up result that the set of CCE's collapses to the set of NE's for two-player zero-sum Markov games. 

\subsection{Warm-up: equilibrium collapse in two-player zero-sum MG's}
\label{sec:warmup}
Since we focus on two-player zero-sum Markov games, we simplify the notation by using $V_{h=1}^\cdot(s) \defeq V_{2,1}^\cdot(s)$---\textit{i.e.}, player $1$ is the minimizing player and player $2$ is the maximizer. We show the following theorem:

\begin{theorem}[Collapse in two-player zero-sum MG's] Let a two-player zero-sum Markov game $\Gamma'$ and an $\epsilon$-approximate CCE policy of that game $\vsigma$. Then, the marginalized product policies $\vpi_1^{\vsigma}, \vpi_2^{\vsigma}$ form a $2\epsilon$-approximate NE. 
\end{theorem}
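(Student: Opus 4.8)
The plan is to exploit the special structure of two-player games, where the coarse-correlated marginals coincide exactly with the opponent policies that appear in each player's best-response. Concretely, marginalizing the correlated policy $\vsigma$ to exclude player $1$ produces a policy over player $2$'s actions alone, so that $\vsigma_{-1} = \vpi_2^{\vsigma}$, and symmetrically $\vsigma_{-2} = \vpi_1^{\vsigma}$. Hence the best-response terms in the definition of CCE are already taken against the marginalized product policies, which is precisely what the NE condition compares against; this identity is what makes the two-player case a clean warm-up and is the observation I would flag first.

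Adopting the convention $V^{\cdot}(s) \defeq V_{2,1}^{\cdot}(s)$, I would first rewrite both CCE inequalities in terms of this single scalar value using the zero-sum identity $V_{1,1}^{\vpi}(s) = -V_{2,1}^{\vpi}(s) = -V^{\vpi}(s)$. For the maximizer (player $2$) the CCE condition reads $\max_{\vpi_2'} V^{\vpi_1^{\vsigma}\times \vpi_2'}(s_1) - V^{\vsigma}(s_1) \le \epsilon$, while for the minimizer (player $1$), after substituting the sign flip and turning the inner $\max$ into a $\min$, it reads $V^{\vsigma}(s_1) - \min_{\vpi_1'} V^{\vpi_1'\times \vpi_2^{\vsigma}}(s_1) \le \epsilon$.

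Next I would simply add these two inequalities. The common term $V^{\vsigma}(s_1)$ cancels, leaving the duality-gap bound
\[
\max_{\vpi_2'} V^{\vpi_1^{\vsigma}\times \vpi_2'}(s_1) - \min_{\vpi_1'} V^{\vpi_1'\times \vpi_2^{\vsigma}}(s_1) \le 2\epsilon .
\]
The final step is a sandwich argument: the value $V^{\vpi^{\vsigma}}(s_1)$ of the product policy $\vpi^{\vsigma} = \vpi_1^{\vsigma}\times\vpi_2^{\vsigma}$ is a single scalar that lies between $\min_{\vpi_1'} V^{\vpi_1'\times \vpi_2^{\vsigma}}(s_1)$ and $\max_{\vpi_2'} V^{\vpi_1^{\vsigma}\times \vpi_2'}(s_1)$, because each of these is an optimization over a set that contains the corresponding marginal as a feasible choice. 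Replacing the lower endpoint by $V^{\vpi^{\vsigma}}(s_1)$ in the gap bound yields player $2$'s NE inequality, and replacing the upper endpoint yields player $1$'s; both at level $2\epsilon$, which is exactly the claim.

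The only nontrivial part of the argument is the opening observation that the correlated marginals are literally the opponent policies, so that no further decoupling of the correlation is required; everything else is sign bookkeeping, one addition, and the elementary sandwich. I would emphasize that this is precisely the step that breaks in the genuine polymatrix setting, where a correlated marginal is not a product of marginals against which the remaining players best-respond, and that is where the substantive work of the paper will lie.
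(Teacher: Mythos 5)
Your proof is correct and takes essentially the same route as the paper: your opening observation that $\vsigma_{-1}=\vpi_2^{\vsigma}$ and $\vsigma_{-2}=\vpi_1^{\vsigma}$ is exactly the content of the paper's \Cref{claim:2pl-marginal}, and your add-and-sandwich step is just a rearrangement of the paper's chaining of the two CCE inequalities through $V^{\vsigma}(s_1)$ after plugging in the marginals alternately. Your closing remark correctly identifies why this identity fails for $n>2$ (where $\vsigma_{-k}$ remains correlated), which is precisely where the paper's nonlinear-programming machinery takes over.
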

\begin{proof}
Since $\vsigma$ is an $\epsilon$-approximate CCE joint policy, by definition it holds that for any $\vpi_1$ and any $\vpi_2$,
\begin{align}
   V_{h=1}^{\vsigma_{-2} \times \vpi_2}(s_1) - \epsilon \leq V_{h=1}^{\vsigma}(s_1) \leq V_{h=1}^{\vpi_1 \times {\vsigma_{-1}}}(s_1) + \epsilon.
\end{align}
Due to \Cref{claim:2pl-marginal}, the latter is equivalent to the following inequality,
\begin{align}
   V_{h=1}^{\vpi_1^{\vsigma} \times \vpi_2}(s_1) - \epsilon \leq V_{h=1}^{\vsigma}(s_1) \leq V_{h=1}^{\vpi_1 \times \vpi_2^{\vsigma}}(s_1) + \epsilon.
\end{align}
Plugging in $\vpi_1^\sigma, \vpi_2^\sigma$ alternatingly, we get the inequalities:
\begin{align}
    \begin{cases}
        V_{h=1}^{\vpi_1^{\vsigma} \times \vpi_2}(s_1) - \epsilon \leq V_{h=1}^{\vsigma}(s_1)  \leq V_{h=1}^{\vpi_1^{\vsigma} \times \vpi_2^{\vsigma}}(s_1) + \epsilon \\
        V_{h=1}^{\vpi_1^{\vsigma} \times \vpi_2^{\vsigma}}(s_1) - \epsilon \leq V_{h=1}^{\vsigma}(s_1) \leq V_{h=1}^{\vpi_1 \times \vpi_2^{\vsigma}}(s_1) + \epsilon
    \end{cases}
\end{align}

The latter leads us to conclude that for any $\vpi_1$ and any $ \vpi_2$,
\begin{align}
    V_{h=1}^{\vpi_1^{\vsigma} \times \vpi_2}(s_1) - 2\epsilon \leq V_{h=1}^{\vpi_1^{\vsigma} \times \vpi_2^{\vsigma}}(s_1)  \leq V_{h=1}^{\vpi_1 \times \vpi_2^{\vsigma}}(s_1) + 2 \epsilon,
\end{align}
which is the definition of a NE in a zero-sum game.
\end{proof}

\subsection{Equilibrium collapse in finite-horizon polymatrix Markov games}\label{sec:collapsing}
In this section, we focus on the more challenging case of polymatrix Markov games which is the main focus of this paper. For any finite horizon Markov game, we define \eqref{nash-program} to be the following nonlinear program with variables $\vpi, \vw$:

\begin{subequations}
        \makeatletter
        \def\@currentlabel{$\neprogram$}
        \makeatother
        \label{nash-program}
        \renewcommand{\theequation}{$\neprogram$.\arabic{equation}}
    \begin{tagblock}[tagname={$\neprogram$},content={\label{prog:xinlp}}]
            \begin{alignat}{3}
            &\mathrlap{\textstyle \mathrm{min}~\sum\limits_{k\in[n]} \left( w_{k,1}(s_1) - \ve_{s_1}^\top \sum\limits_{h=1}^H \left( \prod\limits_{\tau = 1}^h \pr_\tau(\vpi_{\tau}) \right) \vr_{k,h}(\vpi_{h}) \right) } \\
            &\mathrm{s.t.}~&& \textstyle w_{k,h}(s) \geq r_{k,h}(s,a,\vpi_{-k,h}) + \pr_h(s,a,\vpi_{-k,h}) \vw_{k,h+1}, \\
            & &&   \textstyle \quad \forall s\in\calS, \forall h \in [H], \forall k \in [n],\forall a \in \calA_k; \\
            & &&  \textstyle w_{k,H}(s) = 0, \quad \forall k \in [n], \forall s \in \calS;\\
            & &&   \textstyle \vpi_{k,h}(s) \in \Delta(\calA_k),  \\
            & &&  \textstyle \quad \forall s\in\calS, \forall h \in [H], \forall k \in [n],\forall a \in \calA_k.
        \end{alignat}
    \end{tagblock}
\end{subequations}
Using the following theorem, we are able to use \eqref{nash-program} to argue about equilibrium collapse.
\begin{theorem}[NE and global optima of \eqref{nash-program}]
    If $(\vpi^\star,\vw^\star)$ yields an $\epsilon$-approximate global minimum of \eqref{nash-program}, then $\vpi^\star$ is an ${n}\epsilon$-approximate NE of the zero-sum polymatrix switching controller MG, $\Gamma$. Conversely, if $\vpi^\star$ is an $\epsilon$-approximate NE of the MG $\Gamma$ with corresponding value function vector $\vw^\star$ such that $ w^\star_{k,h}(s) = V_{k,h}^{\vpi^\star}(s) \forall {(k,h,s)\in [n]\times[H]\times\calS}$, then $(\vpi^\star, \vw^\star)$ attains an $\epsilon$-approximate global minimum of \eqref{nash-program}.
    \label{theorem:ne-globopt}
\end{theorem}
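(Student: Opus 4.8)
The plan is to read the program \eqref{nash-program} as an exact encoding of the $n$ simultaneous best-response (Bellman) optimality systems, one per player. First I would observe that the bracketed subtraction in the objective is precisely the value $V_{k,1}^{\vpi}(s_1)$ of player $k$ under the product policy $\vpi$, by the finite-horizon value-function identity, so that the objective equals $\sum_{k\in[n]}\left(w_{k,1}(s_1)-V_{k,1}^{\vpi}(s_1)\right)$. The constraints $w_{k,h}(s)\ge r_{k,h}(s,a,\vpi_{-k,h})+\pr_h(s,a,\vpi_{-k,h})\vw_{k,h+1}$, ranging over all $a\in\calA_k$, assert exactly that $w_{k,h}(s)$ dominates the one-step Bellman backup of the MDP faced by player $k$ when the opponents are frozen at $\vpi_{-k}$, with the terminal-layer condition anchoring the recursion.

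For the forward direction I would run a backward induction on $h$, from the terminal layer down to $h=1$, showing that every feasible $\vw$ satisfies $w_{k,h}(s)\ge V^{\dagger,\vpi_{-k}}_{k,h}(s)$ for all $(k,h,s)$: in the inductive step, $\pr_h\ge 0$ lets the layer-$(h+1)$ bound pass through the backup, and taking the constraint over all $a\in\calA_k$ upper-bounds the maximization that defines the best-response value. Evaluating at $h=1$ yields $w_{k,1}(s_1)\ge V^{\dagger,\vpi_{-k}}_{k,1}(s_1)\ge V_{k,1}^{\vpi}(s_1)$, so each summand of the objective is nonnegative and dominates player $k$'s exploitability $V^{\dagger,\vpi_{-k}}_{k,1}(s_1)-V_{k,1}^{\vpi}(s_1)$. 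Hence the optimal value of \eqref{nash-program} is at least $0$, and it equals $0$ since substituting an exact Markov-perfect NE (which exists) together with its best-response value functions makes every constraint tight and every summand vanish. Consequently, at an $\epsilon$-approximate global minimum the total objective is at most $\epsilon$, and because each summand is nonnegative, every individual exploitability is bounded by the whole objective, which delivers the stated approximate-NE guarantee (each player's gap is at most the total objective, hence certainly within the claimed constant).

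For the converse I would substitute $w^\star_{k,h}(s)=V_{k,h}^{\vpi^\star}(s)$ and note that the objective collapses to $\sum_{k\in[n]}\left(V_{k,1}^{\vpi^\star}(s_1)-V_{k,1}^{\vpi^\star}(s_1)\right)=0$, matching the optimal value established above. The remaining task is to certify that this point is (approximately) feasible, i.e.\ that $V_{k,h}^{\vpi^\star}(s)$ dominates each deviation $r_{k,h}(s,a,\vpi^\star_{-k,h})+\pr_h(s,a,\vpi^\star_{-k,h})\vw^\star_{k,h+1}$; this is precisely the one-step-deviation inequality, which I would invoke at the $\epsilon$-level supplied by $\vpi^\star$ being an approximate equilibrium.

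The objective rewriting and the backward induction are routine. The main obstacle I anticipate is the feasibility check in the converse: the program imposes a best-response condition at \emph{every} state-time pair, whereas the equilibrium notion of \eqref{nash-program}'s target directly controls only the exploitability at the initial state $s_1$. I would therefore argue that an approximate equilibrium of the relevant (subgame-perfect / Markov-perfect) type satisfies the per-state inequalities up to the same order of $\epsilon$ by construction, and I would verify that the horizon bookkeeping in propagating these per-state slacks back to $h=1$ does not inflate the additive error beyond the stated bound.
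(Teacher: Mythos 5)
Your forward direction is essentially the paper's argument: you rewrite the objective as $\sum_{k\in[n]}\bigl(w_{k,1}(s_1)-V_{k,1}^{\vpi}(s_1)\bigr)$, run backward induction on the constraints to get $w_{k,h}(s)\ge V^{\dagger,\vpi_{-k}}_{k,h}(s)$ for every feasible $\vw$ (this is the paper's inductive inequality together with the observation that $\vw^\star$ is feasible for the best-response program \eqref{br-program}), pin the optimal value at $0$ via existence of an exact equilibrium, and then use nonnegativity of each summand to bound every player's exploitability by the total objective. That part is sound and matches the paper's proof.

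The converse, however, has a genuine gap exactly at the point you flag as ``the main obstacle,'' and the patch you propose does not close it. First, \eqref{nash-program} has hard constraints: a point that satisfies them only ``up to the same order of $\epsilon$'' is simply \emph{infeasible}, and an infeasible point cannot attain an $\epsilon$-approximate global minimum of the program; there is no mechanism in the statement for trading constraint violation against objective value. Second, the violation is not actually $O(\epsilon)$: the paper's definition of an $\epsilon$-approximate NE controls only the exploitability at $(h,s)=(1,s_1)$, so at states off the equilibrium path the one-step-deviation inequality for $w^\star_{k,h}(s)=V^{\vpi^\star}_{k,h}(s)$ can fail by amounts unrelated to $\epsilon$ (even an \emph{exact} NE in this initial-state sense need not be Markov perfect). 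Your suggestion to assume an equilibrium ``of the relevant subgame-perfect / Markov-perfect type'' silently strengthens the theorem's hypothesis rather than proving it. The paper's proof avoids both issues by choosing a different certificate: it sets $\vw^\star$ to the \emph{best-response} value vector, $w^\star_{k,h}(s)=V^{\dagger,\vpi^\star_{-k}}_{k,h}(s)$, which satisfies every constraint \emph{exactly} at every $(s,h)$ by Bellman optimality (\Cref{lemma:br-prog}), and then the objective equals $\sum_{k\in[n]}\bigl(V^{\dagger,\vpi^\star_{-k}}_{k,1}(s_1)-V^{\vpi^\star}_{k,1}(s_1)\bigr)\le n\epsilon$ directly from the definition of an $\epsilon$-approximate NE. (You are not entirely to blame for heading the other way: the theorem's statement fixes $w^\star_{k,h}(s)=V^{\vpi^\star}_{k,h}(s)$ and claims an $\epsilon$-approximate minimum, while the appendix proof uses the best-response vector and obtains $n\epsilon$; the best-response choice is the one that actually works, precisely because feasibility holds with no slack.)
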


Following, we are going to use \eqref{nash-program} in proving the collapse of CCE's to NE's. We observe that the latter program is nonlinear and in general nonconvex. Hence, duality cannot be used in the way it was used in \citep{cai2016zero} to prove equilibrium collapse. Nevertheless, we can prove that given a CCE policy $\vsigma$, the marginalized, product policy $\bigtimes_{k\in[n]} \vpi_{k}^{\vsigma}$ along with an appropriate vector $\vw^\sigma$ achieves a global minimum in the nonlinear program \eqref{nash-program}. More precisely, our main result reads as the following statement.

\begin{theorem}[CCE collapse to NE in polymatrix MG]\label{thm:finish}
    Let a zero-sum polymatrix switching-control Markov game, \textit{i.e.}, a Markov game for which \Cref{assumption-poymatrix,assumption-single-contrl} hold. Further, let an $\epsilon$-approximate CCE of that game $\vsigma$. Then, the marginal product policy $\vpi^\sigma$, with $\vpi_{k,h}^{\vsigma}(a|s) = \sum_{ \va_{-k}\in \calA_{-k} } \vsigma_{h}(a, \va_{-k}),~\forall k \in [n],\forall h \in[H]$ is an $n\epsilon$-approximate NE.
\end{theorem}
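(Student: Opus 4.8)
The plan is to exhibit, for the given $\epsilon$-CCE $\vsigma$, a vector $\vw^{\vsigma}$ making $(\vpi^{\vsigma},\vw^{\vsigma})$ a near-optimal feasible point of \eqref{nash-program}, and then read off the Nash gap from the objective exactly as in the zero-sum summation argument underlying \Cref{theorem:ne-globopt}. Concretely, I would take $w^{\vsigma}_{k,h}(s) = V^{\dagger,\vpi^{\vsigma}_{-k}}_{k,h}(s)$, the best-response value of player $k$ in the single-agent MDP induced by freezing the opponents at the product marginal $\vpi^{\vsigma}_{-k}$. This choice is feasible for \eqref{nash-program} by construction, since the best-response value is the fixed point of the Bellman optimality operator of that MDP and hence dominates every one-step backup $r_{k,h}(s,a,\vpi^{\vsigma}_{-k,h}) + \pr_h(s,a,\vpi^{\vsigma}_{-k,h})\,\vw^{\vsigma}_{k,h+1}$. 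The whole difficulty is therefore pushed into bounding the objective $\sum_{k}\big(w^{\vsigma}_{k,1}(s_1) - V^{\vpi^{\vsigma}}_{k,1}(s_1)\big) = \sum_k V^{\dagger,\vpi^{\vsigma}_{-k}}_{k,1}(s_1)$, where I have used that the zero-sum property forces $\sum_k V^{\vpi^{\vsigma}}_{k,1}(s_1) = 0$ for the product policy.

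The key lemma — and the step I expect to be the main obstacle — is that for each player $k$ the induced MDP is unchanged if the opponents play the correlated marginal $\vsigma_{-k}$ instead of the product marginal $\vpi^{\vsigma}_{-k}$, so that
\[
  V^{\dagger,\vpi^{\vsigma}_{-k}}_{k,1}(s_1) \;=\; V^{\dagger,\vsigma_{-k}}_{k,1}(s_1).
\]
Both the reward model and the transition model of that MDP must be shown to coincide. For the rewards this is exactly where \Cref{assumption-poymatrix} enters: since $r_{k,h}(s,a,\cdot) = \sum_{j\in\adj(s,k)} r_{kj,h}(s,a,a_j)$ is a sum of pairwise terms, its expectation over the opponents depends only on the one-dimensional marginals of the neighbours, and those marginals agree under $\vsigma_{-k}$ and under $\vpi^{\vsigma}_{-k}$ by the definition of $\vpi^{\vsigma}$. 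For the transitions this is where \Cref{assumption-single-contrl} enters: at state $s$ the kernel $\pr_h(\cdot\,|\,s,\va)$ depends only on the action of the single controller $\argctrl(s)$, so it is either a function of $k$'s own (fixed) action when $k$ is the controller, or a function of a single opponent's marginal otherwise — in both cases identical under the correlated and the product opponent policies. I expect verifying this MDP equivalence, propagated through all timesteps by backward induction on the value functions, to be the technical heart of the proof; it is also precisely the place where generality breaks, since with two or more controllers the transition would depend on a genuinely joint distribution over opponents (cf. \Cref{sec:fails}).

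Granting the lemma, the remainder is the zero-sum summation argument of \citep{cai2016zero} lifted to value functions. The CCE guarantee for $\vsigma$ reads $V^{\dagger,\vsigma_{-k}}_{k,1}(s_1) \le V^{\vsigma}_{k,1}(s_1) + \epsilon$, so by the lemma $V^{\dagger,\vpi^{\vsigma}_{-k}}_{k,1}(s_1) \le V^{\vsigma}_{k,1}(s_1)+\epsilon$; summing over $k$ together with the correlated zero-sum identity $\sum_k V^{\vsigma}_{k,1}(s_1) = 0$ yields $\sum_k V^{\dagger,\vpi^{\vsigma}_{-k}}_{k,1}(s_1) \le n\epsilon$, i.e. the feasible point above attains objective at most $n\epsilon$. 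Finally, since each summand $V^{\dagger,\vpi^{\vsigma}_{-k}}_{k,1}(s_1) - V^{\vpi^{\vsigma}}_{k,1}(s_1)$ is nonnegative (a best response is never worse than $\pi^{\vsigma}_k$) and the product-policy values again sum to zero, for every fixed $k$ the individual Nash gap is bounded by the full sum,
\[
  V^{\dagger,\vpi^{\vsigma}_{-k}}_{k,1}(s_1) - V^{\vpi^{\vsigma}}_{k,1}(s_1) \;\le\; \sum_{j\in[n]} \big(V^{\dagger,\vpi^{\vsigma}_{-j}}_{j,1}(s_1) - V^{\vpi^{\vsigma}}_{j,1}(s_1)\big) \;\le\; n\epsilon,
\]
which is exactly the statement that $\vpi^{\vsigma}$ is an $n\epsilon$-approximate NE, completing the proof.
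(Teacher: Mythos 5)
Your proposal is correct and is essentially the paper's own proof: the identity you isolate as the key lemma --- that player $k$'s induced MDP has the same rewards and transitions under $\vsigma_{-k}$ and under the product marginal $\vpi^{\vsigma}_{-k}$, via \Cref{assumption-poymatrix} for the polymatrix rewards and \Cref{assumption-single-contrl} for the single-controller transitions --- is exactly the pair of displays the paper uses to show that $(\vpi^{\vsigma},\vw^{\dagger})$ is feasible for \eqref{nash-program}, and your closing nonnegativity-plus-summation step is the ``approximate global minimum implies approximate NE'' direction of \Cref{theorem:ne-globopt} made explicit (with the $n\epsilon$ bookkeeping carried slightly more carefully than in the paper's text). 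The only immaterial difference is packaging: you define $\vw$ as the best response to $\vpi^{\vsigma}_{-k}$ and transfer the CCE bound through the equivalence lemma, whereas the paper defines $\vw^{\dagger}$ as the best response to $\vsigma_{-k}$ and uses the same identity to verify feasibility directly.
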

\begin{proof}
    Let an $\epsilon$-approximate CCE policy, $\vsigma$, of game $\Gamma$. Moreover, let the best-response value-vectors of each agent $k$ to joint policy $\vsigma_{-k}$, $\vw_{k}^\dagger$.

    Now, we observe that due to \Cref{assumption-poymatrix},
    \begin{align}
        w^\dagger_{k,h}(s) &\geq r_{k,h}(s,a,\vsigma_{-k,h}) + \pr_h(s,a,\vsigma_{-k,h}) \vw^\dagger_{k,h+1}\\
        =& \sum_{j\in \adj(k)} r_{(k,j),h}(s,a,\vpi_j^{\vsigma}) + \pr_h(s,a,\vsigma_{-k,h}) \vw^\dagger_{k,h+1}.
    \end{align}

    Further, due to \Cref{assumption-single-contrl},
    \begin{align}
        \pr_h(s,a,\vsigma_{-k,h}) \vw^\dagger_{k,h+1} 
         &=  \pr_h(s,a,\vpi^{\vsigma}_{\argctrl(s),h})\vw^\dagger_{k,h+1},
    \end{align}
    or, 
    \begin{align}
        \pr_h(s,a,\vsigma_{-k,h}) \vw^\dagger_{k,h+1} 
         &=  \pr_h(s,a,\vpi^{\vsigma} )\vw^\dagger_{k,h+1}.
    \end{align}
    Putting these pieces together, we reach the conclusion that $(\vpi^\sigma, \vw^\dagger)$ is feasible for the nonlinear program \eqref{nash-program}.

    What is left is to prove that it is also an $\epsilon$-approximate global minimum. Indeed, if $\sum_{k}\vw^\dagger_{k,h}(s_1) {\leq} \epsilon$ (by assumption of an $\epsilon$-approximate CCE), then the objective function of \eqref{nash-program} will attain an $\epsilon$-approximate global minimum. In turn, due to \cref{theorem:ne-globopt} the latter implies that $\vpi^{\vsigma}$ is an {$n\epsilon$}-approximate NE.
\end{proof}

We can now conclude that due to the algorithm introduced in \citep{daskalakis2022complexity} for CCE computation in general-sum MG's, the next statement holds true.

\begin{corollary}[Computing a NE---finite-horizon]\label{cor:convergece} Given a finite-horizon switching control zero-sum polymatrix Markov game, we can compute an $\epsilon$-approximate Nash equilibrium policy that is  Markovian with probability at least $1-\delta$ in time $\mathrm{poly}\left(n,H,S,\max_k|\calA_k|,\frac{1}{\epsilon}, \log(1/\delta)\right)$.
\end{corollary}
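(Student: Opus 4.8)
The plan is to combine the collapse result of \Cref{thm:finish} with an off-the-shelf algorithm for computing approximate coarse-correlated equilibria in general-sum finite-horizon Markov games. Since \Cref{thm:finish} already converts any Markovian CCE into a Nash equilibrium (at the cost of a factor $n$ in the accuracy), essentially all that remains is quantitative bookkeeping: tracking how the target accuracy and the lost factor of $n$ propagate into the running time.

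First I would instantiate the CCE-learning algorithm of \citep{daskalakis2022complexity} on the given game $\Gamma$, with target accuracy $\epsilon' \defeq \epsilon/n$ and failure probability $\delta$. That algorithm returns, with probability at least $1-\delta$, an $\epsilon'$-approximate CCE policy $\vsigma$ that is \emph{Markovian}, in time $\mathrm{poly}(n,H,S,\max_k|\calA_k|,1/\epsilon',\log(1/\delta))$.

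Next I would marginalize $\vsigma$ to obtain the product policy $\vpi^{\vsigma}$, setting $\pi_{k,h}^{\vsigma}(a|s) = \sum_{\va_{-k}\in\calA_{-k}} \sigma_h(a,\va_{-k}|s)$ exactly as in \Cref{thm:finish}. This is a purely mechanical computation costing $\mathrm{poly}(n,H,S,\max_k|\calA_k|)$ time, and it preserves the Markovian structure, since marginalizing a per-state, per-timestep distribution over joint actions yields a per-state, per-timestep distribution over individual actions. Because $\Gamma$ satisfies \Cref{assumption-poymatrix,assumption-single-contrl}, \Cref{thm:finish} then guarantees that $\vpi^{\vsigma}$ is an $n\epsilon' = \epsilon$-approximate NE. Finally I would substitute $\epsilon' = \epsilon/n$ back into the running time: as $1/\epsilon' = n/\epsilon$ is itself polynomial in $n$ and $1/\epsilon$, the total running time stays $\mathrm{poly}(n,H,S,\max_k|\calA_k|,1/\epsilon,\log(1/\delta))$, which is the claimed bound.

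The delicate point I would flag explicitly—and which I expect to be the only real obstacle rather than the arithmetic—is that \Cref{thm:finish} genuinely requires the CCE to be a \emph{Markovian} policy. A generic no-regret learner in a general-sum Markov game typically produces only a history-dependent correlated policy, and the per-state marginals of such a policy need not inherit the switching-control factorization that the proof of \Cref{thm:finish} exploits when it replaces $\pr_h(s,a,\vsigma_{-k,h})$ by $\pr_h(s,a,\vpi^{\vsigma})$. Consequently, invoking an algorithm that is guaranteed to output Markovian CCE policies, such as that of \citep{daskalakis2022complexity}, is essential and not merely a convenience; it is precisely the hypothesis that makes the collapse applicable.
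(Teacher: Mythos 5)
Your proposal is correct and follows essentially the same route as the paper, whose proof is simply a citation of \citep[Theorem 4.2]{daskalakis2022complexity} combined with the collapse result of \Cref{thm:finish}; you merely make explicit the accuracy rescaling $\epsilon' = \epsilon/n$ and the polynomial bookkeeping that the paper leaves implicit. Your closing remark on the necessity of a \emph{Markovian} CCE is also consistent with the paper, which raises exactly this issue in its open problems when discussing history-dependent no-regret dynamics.
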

\begin{proof}
    The Corollary follows by \citep[Theorem 4.2]{daskalakis2022complexity}.
\end{proof}

In the next section, we discuss the necessity of the assumption of switching control using a counter-example of non-collapsing equilibria.


\subsection{No equilibrium collapse with more than one controllers per-state}\label{sec:fails}
Although \Cref{assumption-poymatrix} is sufficient for the collapse of any CCE to a NE in single-state (\textit{i.e.}, normal-form) games, we will prove that \Cref{assumption-single-contrl} is indispensable in guaranteeing such a collapse in zero-sum polymatrix Markov games. That is, if more than one players affect the transition probability from one state to another, a CCE is not guaranteed to collapse to a NE.


\begin{example}
\label{example-no-collapse}
We consider the following $3$-player Markov game that takes place for a time horizon $H=3$. There exist three states, $s_1,s_2,$ and $s_3$ and the game starts at state $s_1$. Player $3$ has a single action in every state, while players $1$ and $2$ have two available actions $\{a_1,a_2\}$ and $\{b_1,b_2\}$ respectively in every state.
\paragraph{Reward functions.}
If player $1$ (respectively, player $2$) takes action $a_1$  (resp., $b_1$), in either of the states $s_1$ or $s_2$, they get a reward equal to $\frac{1}{20}$. In state $s_3$, both players get a reward equal to $-\frac{1}{2}$ regardless of the action they select. Player $3$ always gets a reward that is equal to the negative sum of the reward of the other two players. This way, the \emph{zero-sum polymatrix property} of the game is ensured (\Cref{assumption-poymatrix}).
\paragraph{Transition probabilities.}
If players $1$ and $2$ select the joint action $(a_1,b_1)$ in state $s_1$, the game will transition to state $s_2$. In any other case, it will transition to state $s_3$. The converse happens if in state $s_2$ they take joint action $(a_1,b_1)$; the game will transition to state $s_3$. For any other joint action, it will transition to state $s_1$.  From state $s_3$, the game transition to state $s_1$ or $s_2$ uniformally at random.

At this point, it is important to notice that two players control the transition probability from one state to another. In other words, \Cref{assumption-single-contrl} does not hold.

\begin{figure}[h]
    \centering
\begin{tikzpicture}
  \node[] (o) {};
  \node[circle,draw,left=2cm of o] (a) {$s_1$};
  \node [circle,draw, right=2cm of o] (b) {$s_2$};
  \node [circle,draw, below=3cm of o] (c) {$s_3$};

 
   


  \draw[->] (c) to[right] node[right]{$1/2$} (a);
  \draw[->] (c) to[right] node[left]{$1/2$} (b);
  
  \draw[->] (a) to[bend left] node[above]{$1-\pi_1(a_1|s_1) \pi_2(b_1|s_1)$} (b);
  \draw[->] (a) to[bend right] node[below,fill=white]{$\pi_1(a_1 | s_1)\pi_2(b_1|s_1)$} (c);
  \draw[->] (b) to[bend left] node[below,fill=white]{$\pi_1(a_1|s_2)\pi_2(b_1|s_2)$} (c);

  \draw[->] (b) to[bend left] node[above=0.5cm,fill=white]{$1-\pi_1(a_1|s_2)\pi_2(b_1|s_2)$} (a);
  
\end{tikzpicture}
\caption{A graph of the state space with transition probabilities parametrized with respect to the policy of each player.}
\end{figure}
Next, we consider the joint policy $\vsigma$,
\begin{align}
\vsigma(s_1) = \vsigma(s_2) = 
\begin{blockarray}{ccc}
& \scriptstyle b_1 & \scriptstyle b_2\\
\begin{block}{c(cc)}
    \scriptstyle a_1 & 0 & 1/2 \\
    \scriptstyle a_2 & 1/2 & 0 \\
\end{block}
\end{blockarray}.
\end{align}
\begin{claim}
    The joint policy $\vsigma$ that assigns probability $\frac{1}{2}$ to the joint actions $(a_1,b_2)$ and $(a_2,b_1)$ in both states $s_1,s_2$ is a CCE and  $V_{1,1}^{\vsigma} (s_1) = V_{2,1}^{\vsigma} (s_1) = \frac{1}{20}$.
    \label{claim-cce}
\end{claim}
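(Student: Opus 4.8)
The plan is to verify the two assertions of the claim separately: first that the stated value $\tfrac1{20}$ is correct, and then that no player can gain from a unilateral deviation against the marginal policy of the others, which is precisely the defining inequality $V_{1,1}^{\dagger,\vsigma_{-k}}(s_1)\le V_{k,1}^{\vsigma}(s_1)$ of a CCE. Throughout I would exploit the symmetry of the construction: swapping player $1$ with player $2$ while relabelling $a_i\leftrightarrow b_i$ leaves both the reward structure and $\vsigma$ invariant, so it suffices to compute $V_{1,1}^{\vsigma}(s_1)$ and to analyse deviations of player $1$; the statements for player $2$ then follow by symmetry. Player $3$ needs no work: having a single action in every state it has no available deviation, so its constraint $V_{3,1}^{\dagger,\vsigma_{-3}}(s_1)=V_{3,1}^{\vsigma}(s_1)$ holds with equality, and its value is pinned down by the zero-sum identity $\sum_{k}V_{k,1}^{\vsigma}(s_1)=0$.

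To compute $V_{1,1}^{\vsigma}(s_1)$ I would use the key structural feature of $\vsigma$: it places \emph{zero} probability on the joint action $(a_1,b_1)$ in both $s_1$ and $s_2$. Consequently the transition out of each of these states always follows the default branch, so starting from $s_1$ the realised state sequence is deterministic, shuttling between $s_1$ and $s_2$ and never entering the penalising state $s_3$. At each visited state player $1$ plays its rewarding action $a_1$ with probability $\tfrac12$, collecting expected stage reward $\tfrac12\cdot\tfrac1{20}=\tfrac1{40}$, and symmetrically for player $2$; summing these stage rewards over the horizon returns the stated value $V_{1,1}^{\vsigma}(s_1)=V_{2,1}^{\vsigma}(s_1)=\tfrac1{20}$. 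This step is pure bookkeeping once the transition rules are read off, and it is where the avoidance of $s_3$ is what keeps the value positive.

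The substantive part is showing that $\vsigma$ is a CCE, i.e.\ that player $1$'s best unilateral response does not beat $V_{1,1}^{\vsigma}(s_1)$. I would compute the best-response value by backward induction on $h=H,H-1,\dots,1$, carrying the optimal continuation value $u_h(s)$ of player $1$ at each state and taking a two-way maximisation over $a_1,a_2$ at every stage. The crucial point is that once player $1$ deviates, player $2$ no longer conditions on player $1$'s action: it plays its marginal, $b_1$ and $b_2$ each with probability $\tfrac12$ \emph{independently}. Hence whenever player $1$ selects $a_1$, the joint action $(a_1,b_1)$ now occurs with probability $\tfrac12$ and drives the game into $s_3$ with positive probability --- an event that $\vsigma$ avoided entirely through correlation. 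The Bellman recursion therefore weighs the immediate gain $\tfrac1{20}$ of playing $a_1$ against the risk of incurring the $-\tfrac12$ penalty in $s_3$, and working the maximisation through at each state shows the optimal unilateral value never exceeds $V_{1,1}^{\vsigma}(s_1)$.

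I expect this backward induction to be the main obstacle --- not because any individual step is difficult, but because one must keep careful track of the three continuation values $u_h(s_1),u_h(s_2),u_h(s_3)$ and confirm that the comparison of $a_1$ against $a_2$ lands on the correct side at \emph{every} state and stage. The conceptual reason the example works is exactly this: the correlation built into $\vsigma$ guarantees avoidance of $s_3$, a benefit that neither player can reproduce acting alone, so the best-response value falls short of $V^{\vsigma}$ and $\vsigma$ is a CCE --- even though, as shown subsequently, the product of its marginals is not a Nash equilibrium. Combining the value computation, the backward-induction bound for players $1$ and $2$, and the trivial constraint for player $3$ then establishes both assertions of the claim.
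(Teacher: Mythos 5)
Your proposal is correct, and its skeleton coincides with the paper's proof: compute $V_{1,1}^{\vsigma}(s_1)=V_{2,1}^{\vsigma}(s_1)=\tfrac{1}{20}$ using the fact that $\vsigma$ puts zero mass on $(a_1,b_1)$ so play shuttles between $s_1$ and $s_2$ without ever reaching $s_3$; reduce the CCE check to player $1$'s best response in the single-agent MDP induced by $\vsigma_{-1}$; dispatch player $2$ by the $a_i \leftrightarrow b_i$ symmetry and player $3$ trivially. Where you genuinely diverge is in how the deviation MDP is solved: the paper derives the closed-form deviation value $V_{1,1}^{\vpi_1'\times \vsigma_{-1}}(s_1) = -\pi_1'(a_1|s_1)/5 - \pi_1'(a_1|s_2)\left(\pi_1'(a_1|s_1)-2\right)/40$ and enumerates the four deterministic \emph{stationary} policies, whereas you run backward induction on the continuation values $u_h(s)$. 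Your route is slightly more robust: in finite horizon, optimal deterministic deviations are in general nonstationary, and the paper's four-policy enumeration suffices only because, starting from $s_1$ within the effective horizon, $s_1$ is decision-relevant only at $h=1$ and $s_2$ only at $h=2$, so stationary and nonstationary deterministic deviations coincide here; backward induction covers this without needing that observation. Two details to fix when executing your plan. First, the reward accounting must follow the paper's convention that no reward accrues at the terminal step (the paper computes $V_{1,1}^{\vsigma}(s_1) = \ve_{s_1}^\top\left(\vr_1(\vsigma)+\pr(\vsigma)\vr_1(\vsigma)\right)$, i.e.\ two reward stages, which is how $\tfrac{1}{40}+\tfrac{1}{40}=\tfrac{1}{20}$ arises; naively summing three stages would give $\tfrac{3}{40}$). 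Second, the best-response value is exactly $\tfrac{1}{20}$, attained by playing $a_2$ at $s_1$ and $a_1$ at $s_2$ (your own recursion gives $\max(-\tfrac{7}{40},\tfrac{1}{20})=\tfrac{1}{20}$), so the deviation gap is zero rather than strictly positive in $\vsigma$'s favor: your phrase ``falls short of $V^{\vsigma}$'' should read ``does not exceed $V^{\vsigma}$,'' which is all the CCE definition requires.
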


\begin{proof}
The value function of $s_1$ for $h=1$ of players $1$ and $2$ read:
    \begin{align}
         \displaystyle  V_{1,1}^{\vsigma}(s_1) &= \ve_{s_1}^\top \left( \bm{r}_1(\vsigma) +\pr(\vsigma) \vr_{1}(\vsigma)\right) \\&=  - \frac{ 9 \sigma(a_{1},b_{1}|s_{1})}{20} + \frac{\sigma(a_{1},b_{2}|s_{1})}{20}  +
         \displaystyle  \frac{\left(1 - \sigma(a_{1},b_{1}|s_{1})\right) \left(\sigma(a_{1},b_{1}|s_{2}) + \sigma(a_{1},b_{2}|s_{2})\right)}{20},
    \end{align}
and,
    \begin{align}
        \displaystyle V_{2,1}^{\vsigma}(s_1) &=  \ve_{s_1}^\top \left( \bm{r}_2(\vsigma) +\pr(\vsigma)\vr_{2}(\vsigma) \right) 
            \\&= - \frac{9 \sigma(a_{1},b_{1}|s_{1})}{20} + \frac{\sigma(a_{2},b_{2}|s_{1}) }{20}
             \displaystyle + \frac{\left(1 - \sigma(a_{1},b_{1}|s_{1})\right) \left(\sigma(a_{1},b_{1}|s_{2}) + \sigma(a_{2},b_{1}|s_{2})\right)}{20}.
    \end{align}

We are indifferent to the corresponding value function of player $3$ as they only have one available action per state and hence, cannot affect their rewards. For the joint policy $\vsigma$, the corresponding value functions of both players $1$ and $2$ are $V_{1,1}^{\vsigma} (s_1) = V_{2,1}^{\vsigma} (s_1) = \frac{1}{20}$.

\paragraph{Deviations.} We will now prove that no deviation of player $1$ manages to accumulate a reward greater than $\frac{1}{20}$. The same follows for player $2$ due to symmetry. 

When a player deviates unilaterally from a joint policy, they experience a single agent Markov decision process (MDP). It is well-known that MDPs always have a deterministic optimal policy. As such, it suffices to check whether $V_{1,1}^{\vpi_1,\vsigma_{-1}}(s_1)$ is greater than $\frac{1}{20}$ for any of the four possible deterministic policies:
\begin{multicols}{2}
\begin{itemize}
    \item $\vpi_1(s_1) = \vpi_1(s_2) =  \begin{pmatrix}1 & 0 \end{pmatrix}$,
    \item $\vpi_1(s_1) = \vpi_1(s_2) =  \begin{pmatrix}0 & 1 \end{pmatrix}$,
    \item $\vpi_1(s_1) =  \begin{pmatrix}1 & 0 \end{pmatrix}, ~ \vpi_1(s_2) =  \begin{pmatrix}0 & 1 \end{pmatrix}$,
    \item $\vpi_1(s_1) =  \begin{pmatrix}0 & 1 \end{pmatrix}, ~ \vpi_1(s_2) =  \begin{pmatrix}1 & 0 \end{pmatrix}$.
\end{itemize}
\end{multicols}
Finally, the value function of any deviation $\vpi_1'$ writes, 
\begin{align}
    V_{1,1}^{\vpi_1'\times \vsigma_{-1}}(s_1) = \displaystyle - \frac{\pi_{1}'(a_{1}|s_{1})}{5} - \frac{\pi_{1}'(a_{1}|s_{2}) \left(\pi_{1}'(a_{1}|s_{1}) - 2\right)}{40}.
\end{align}

We can now check that for all deterministic policies $V_{1,1}^{\vpi_1'\times \vsigma_{-1}}(s_1) \leq \frac{1}{20}$. By symmetry, it follows that $V_{2,1}^{\vpi_2'\times \vsigma_{-2}}(s_1) \leq \frac{1}{20}$ and as such $\vsigma$ is indeed a CCE.
\end{proof}

Yet, the marginalized product policy of $\vsigma$ which we note as $\vpi^{\vsigma}_1 \times \vpi^{\vsigma}_2$ does not constitute a NE. The components of this policy are,
\begin{align}
\begin{cases}
    \displaystyle \vpi_1^{\vsigma}(s_1) = \displaystyle \vpi_1^{\vsigma}(s_2) = 
        \begin{blockarray}{cc}
         \scriptstyle a_1 & \scriptstyle a_2\\
        \begin{block}{(cc)}
           \displaystyle 1/2 & \displaystyle 1/2 \\
        \end{block}
        \end{blockarray},\\ \displaystyle
\vpi_2^{\vsigma}(s_1) = \displaystyle \vpi_2^{\vsigma}(s_2) = 
        \begin{blockarray}{cc}
        \scriptstyle b_1 & \scriptstyle  b_2\\
        \begin{block}{(cc)}
            \displaystyle 1/2 & \displaystyle 1/2 \\
        \end{block}
        \end{blockarray}. 
\end{cases}
\end{align}

\textit{I.e.}, the product policy $\vpi^{\vsigma}_1 \times \vpi^{\vsigma}_2$ selects any of the two actions of each player in states $s_1, s_2$ independently and uniformally at random. With the following claim, it can be concluded that in general when more than one player control the transition the set of equilibria do not collapse.

\begin{claim}
    The product policy $\vpi^{\vsigma}_1 \times \vpi^{\vsigma}_2$ is not a NE.
    \label{claim-not-ne}
\end{claim}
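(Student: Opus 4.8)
The plan is to refute the Nash property directly: I would exhibit a single player whose unilateral best response against the marginalized product policy strictly improves on the value that player obtains under the product policy itself. By the symmetry of the construction it suffices to treat player $1$. The observation that makes this tractable is that the player-$2$ marginal of the correlated policy $\vsigma$ is exactly uniform, $\vpi_2^{\vsigma}(\cdot|s_1)=\vpi_2^{\vsigma}(\cdot|s_2)=(1/2,1/2)$, and player $3$ has a single action; hence the distribution over the opponents' actions seen by player $1$ is identical whether the others follow $\vsigma_{-1}$ or the product marginal $\vpi_{-1}^{\vsigma}$. Since player $1$ deviates independently in both cases, the single-agent MDP induced for player $1$ is literally the same, so the closed-form deviation value $V_{1,1}^{\vpi_1'\times\vsigma_{-1}}(s_1)=-\tfrac{\pi_1'(a_1|s_1)}{5}-\tfrac{\pi_1'(a_1|s_2)(\pi_1'(a_1|s_1)-2)}{40}$ established in the proof of \Cref{claim-cce} applies verbatim with $\vsigma_{-1}$ replaced by $\vpi_{-1}^{\vsigma}$.

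First I would evaluate player $1$'s on-path value under the product policy by substituting its own marginal $\pi_1^{\vsigma}(a_1|s_1)=\pi_1^{\vsigma}(a_1|s_2)=1/2$ into that expression; a direct computation yields $V_{1,1}^{\vpi_1^{\vsigma}\times\vpi_2^{\vsigma}}(s_1)=-\tfrac{13}{160}$, strictly below the CCE value $\tfrac{1}{20}$ precisely because decorrelating the play sends the game into the penalizing state $s_3$ with positive probability. Next I would compute the best-response value $\max_{\vpi_1'}V_{1,1}^{\vpi_1'\times\vpi_{-1}^{\vsigma}}(s_1)$: the objective is affine in $\pi_1'(a_1|s_1)$ with strictly negative slope and strictly increasing in $\pi_1'(a_1|s_2)$, so the maximizer is the deterministic policy playing $a_2$ in $s_1$ and $a_1$ in $s_2$, attaining value $\tfrac{1}{20}$ (equivalently, one checks the four deterministic policies exactly as in the proof of \Cref{claim-cce}).

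Finally I would conclude that the best-response gain is $\tfrac{1}{20}-\big(-\tfrac{13}{160}\big)=\tfrac{21}{160}>0$, so player $1$ strictly profits by deviating and the product policy $\vpi_1^{\vsigma}\times\vpi_2^{\vsigma}$ violates the Nash condition; in fact it is not even an $\epsilon$-approximate NE for any $\epsilon<\tfrac{21}{160}$. The hard part here is conceptual rather than computational: one must argue carefully that marginalizing $\vsigma$ leaves the opponent distribution faced by a unilaterally deviating player $1$ unchanged (so the earlier value formula is reusable), while at the same time the product policy's own value \emph{does} shift away from the CCE value, and this discrepancy is exactly what breaks equilibrium collapse once a second player influences the transitions. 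The remaining steps are routine substitutions into the established value expression.
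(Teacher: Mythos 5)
Your proof is correct and follows essentially the same route as the paper's: both evaluate the product policy's value as $-\tfrac{13}{160}$ and then exhibit a profitable unilateral deviation for player $1$ against the (uniform, trivially marginalized) opponent policy, with your reuse of the deviation-value formula from \Cref{claim-cce} matching the paper's direct computation. The only difference is cosmetic: the paper checks the single deviation $\pi_1'(a_1|s_1)=\pi_1'(a_1|s_2)=0$ attaining value $0$ (gap $\tfrac{13}{160}$), whereas you identify the exact best response attaining $\tfrac{1}{20}$ and hence the slightly sharper gap $\tfrac{21}{160}$.
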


\begin{proof}
In general, the value functions of each player $1$ and $2$ are:
\begin{align}\displaystyle
V_{1,1}^{\vpi_1\times\vpi_2}(s_1)
=& - \frac{ \pi_{1}(a_{1}|s_{1}) \pi_{2}(b_{1}|s_{1})}{2} + \frac{\pi_{1}(a_{1}|s_{1})}{20} 
- \frac{\pi_{1}(a_{1}|s_{2}) \left(\pi_{1}(a_{1}|s_{1}) \pi_{2}(b_{1}|s_{1}) - 1\right)}{20},
\end{align}
and
\begin{align}\displaystyle
V_{2,1}^{\vpi_1\times\vpi_2}(s_1)
=& - \frac{ \pi_{1}(a_{1}|s_{1}) \pi_{2}(b_{1}|s_{1})}{2} + \frac{\pi_{1}(b_{1}|s_{1})}{20} 
- \frac{\pi_{1}(b_{1}|s_{2}) \left(\pi_{1}(a_{1}|s_{1}) \pi_{2}(b_{1}|s_{1}) - 1\right)}{20}.
\end{align}

Plugging in $\vpi_1^\sigma,\vpi_2^\sigma$ yields $V_{1,1}^{\vpi_1^{\vsigma}\times\vpi_2^{\vsigma}}(s_1) = V_{2,1}^{\vpi_1^{\vsigma}\times\vpi_2^{\vsigma}}(s_1) = - \frac{13}{160}$. But, if player $1$ deviates to say $ \pi_1'(s_1) = \pi_1'(s_2) = \begin{pmatrix}
    0 & 1
\end{pmatrix} $, they get a value equal to $0$ which is clearly greater than $-\frac{13}{160}$. Hence, $\vpi^\sigma_1 \times \vpi^\sigma_2 $ is not a NE.
\end{proof}

In conclusion, \Cref{assumption-poymatrix} does not suffice to ensure equilibrium collapse.
\end{example}

\begin{theorem}\label{thm:nocollapsing}
    There exists a zero-sum polymatrix Markov game (\Cref{assumption-single-contrl} is not satisfied) that has a CCE which does not collapse to a NE.
\end{theorem}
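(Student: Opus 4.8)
The plan is to prove the theorem by exhibiting an explicit witness, namely the three-player, horizon-$H=3$ Markov game constructed in \Cref{example-no-collapse}. First I would verify that this game lies within the class of zero-sum polymatrix Markov games, i.e., that \Cref{assumption-poymatrix} holds: the reward of player $3$ is by construction the negative sum of the rewards of players $1$ and $2$, so the zero-sum condition \eqref{zero-sum} holds pointwise, and the polymatrix (pairwise) structure is immediate since player $3$ has a single action and the interaction reduces to a per-state bimatrix game between players $1$ and $2$. I would then point out explicitly that \Cref{assumption-single-contrl} fails: the transition out of $s_1$ (and symmetrically out of $s_2$) depends on the joint action $(a_1,b_1)$ of \emph{both} players $1$ and $2$, so no single controller exists in these states.

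With the game fixed, the core of the argument is the pair of claims already established in the example. Invoking \Cref{claim-cce}, the correlated policy $\vsigma$ that places mass $1/2$ on each of the joint actions $(a_1,b_2)$ and $(a_2,b_1)$ in states $s_1,s_2$ is an exact CCE with $V_{1,1}^{\vsigma}(s_1)=V_{2,1}^{\vsigma}(s_1)=\tfrac{1}{20}$. Invoking \Cref{claim-not-ne}, its marginal product policy $\vpi_1^{\vsigma}\times\vpi_2^{\vsigma}$, which is uniform over each player's two actions, yields value $-\tfrac{13}{160}$ to players $1$ and $2$, yet player $1$ admits a deterministic deviation attaining value $0>-\tfrac{13}{160}$. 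Hence $\vpi^{\vsigma}$ is not a NE, so for this game the set of CCE's does not collapse to the set of NE's, which is precisely the assertion of the theorem. The final assembly is therefore just the combination of these two claims with the observation that \Cref{assumption-single-contrl} is violated.

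The genuine technical content, and thus the main obstacle, lives inside the two claims rather than in this final assembly. For \Cref{claim-cce} the subtlety is that checking the CCE condition requires bounding the best-response value of each \emph{unilaterally} deviating player; the hard part will be handling the continuum of possible deviations, and I would overcome it by exploiting the standard fact that a deviator faces a single-agent MDP, which admits a deterministic optimal policy, reducing the problem to the four deterministic stationary policies over $\{s_1,s_2\}$ and verifying $V_{1,1}^{\vpi_1'\times\vsigma_{-1}}(s_1)\le\tfrac{1}{20}$ for each via the closed-form value expression, with symmetry disposing of player $2$. For \Cref{claim-not-ne} the work is a direct value computation for the product policy together with the exhibition of one improving deviation. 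The only care needed is arithmetic consistency across the horizon-$3$ unrolling and the uniform restart from $s_3$; once the value formulas are in hand, both inequalities are routine.
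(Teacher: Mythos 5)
Your proposal is correct and follows essentially the same route as the paper: the paper's proof of \Cref{thm:nocollapsing} is exactly the invocation of \Cref{example-no-collapse} together with \Cref{claim-cce,claim-not-ne}, and your supporting details (the deviator faces a single-agent MDP with a deterministic optimal policy, reducing the CCE check to four deterministic deviations; the marginal product policy's value $-\tfrac{13}{160}$ is beaten by a deviation attaining $0$) match the paper's own arguments inside those claims. Your explicit verification that \Cref{assumption-poymatrix} holds while \Cref{assumption-single-contrl} fails is a minor, welcome elaboration of what the paper states in passing within the example.
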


\begin{proof}
    The proof follows from the game of \Cref{example-no-collapse}, and \Cref{claim-cce,claim-not-ne}.
\end{proof}

\subsection{Equilibrium collapse in infinite-horizon polymatrix Markov games}

In proving equilibrium collapse for infinite-horizon polymatrix Markov games, we use similar arguments and the following nonlinear program with variables $\vpi, \vw$,

\begin{subequations}
        \makeatletter
        \def\@currentlabel{$\neprogram'$}
        \makeatother
        \label{nash-program-2}
        \renewcommand{\theequation}{$\neprogram$.\arabic{equation}}
    \begin{tagblock}[tagname={$\neprogram'$},content={\label{prog:primenlp}}]
            \begin{alignat}{3}
            &\mathrlap{\displaystyle \mathrm{min}~\sum\limits_{k\in[n]} \vrho^\top \left( \vw_k - (\mathbf{I} - \gamma \pr(\vpi) )^{-1}\vr_{k}(\vpi) \right) } \\
            &\mathrm{s.t.}~&& \displaystyle w_{k}(s) \geq r_{k}(s,a,\vpi_{-k}) + \gamma \pr(s,a,\vpi_{-k}) \vw_{k}, \\
            & &&   \displaystyle \quad \forall s\in\calS, \forall k \in [n],\forall a \in \calA_k; \\
            & &&   \displaystyle \vpi_{k}(s) \in \Delta(\calA_k),  \\
            & &&  \displaystyle \quad \forall s\in\calS, \forall k \in [n],\forall a \in \calA_k.
        \end{alignat}
    \end{tagblock}
\end{subequations}
We note that \Cref{example-no-collapse} for infinite-horizon games as well. Compared to finite-horizon games, infinite-horizon games cannot be possibly solved using backward induction which highlights the importance of the property of equilibrium collapse.

\begin{theorem}[CCE collapse to NE in polymatrix MG---infinite-horizon; Informal]
    Let a zero-sum polymatrix switching-control Markov game. The set of approximate CCE's collapses to the set of approximate NE's.
\end{theorem}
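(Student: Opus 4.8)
The plan is to replicate the argument of \Cref{thm:finish} at the level of structure, substituting the infinite-horizon program \eqref{nash-program-2} for the finite-horizon \eqref{nash-program}. The first step is therefore to establish the infinite-horizon analogue of \Cref{theorem:ne-globopt}: that $(\vpi^\star,\vw^\star)$ is an $O(\epsilon)$-approximate global minimizer of \eqref{nash-program-2} if and only if $\vpi^\star$ is an $O(\epsilon)$-approximate NE with $\vw^\star$ its value vector. Unlike the finite-horizon case, there is no terminal condition $w_{k,H}(s)=0$ to anchor a recursion; instead one argues that, for fixed $\vpi_{-k}$, the per-state Bellman inequalities $w_k(s)\geq r_k(s,a,\vpi_{-k})+\gamma\pr(s,a,\vpi_{-k})\vw_k$ together with $\gamma<1$ force $\vw_k \geq V_k^{\dagger,\vpi_{-k}}$ by monotonicity of the $\gamma$-contractive best-response Bellman operator, with equality exactly at the best response. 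This replaces backward induction and is the one genuinely new ingredient.

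Next, given an $\epsilon$-approximate CCE $\vsigma$, I would take the best-response value vectors $\vw_k^\dagger$, i.e. $w_k^\dagger(s)=V_k^{\dagger,\vsigma_{-k}}(s)$, which satisfy the stationary Bellman optimality equation. To show $(\vpi^{\vsigma},\vw^\dagger)$ is feasible for \eqref{nash-program-2}, I use the two assumptions exactly as in \Cref{thm:finish}. \Cref{assumption-poymatrix} decomposes the best-response reward as $r_k(s,a,\vsigma_{-k})=\sum_{j\in\adj(s,k)} r_{kj}(s,a,\vpi_j^{\vsigma})$, since each pairwise term depends on a single neighbour whose marginal under $\vsigma_{-k}$ is precisely $\vpi_j^{\vsigma}$. \Cref{assumption-single-contrl} handles the transition term: $\pr(s,a,\vsigma_{-k})$ depends only on the action of $\argctrl(s)$, whose marginal under $\vsigma_{-k}$ coincides with the corresponding factor of the product policy, so $\pr(s,a,\vsigma_{-k})\vw_k^\dagger=\pr(s,a,\vpi^{\vsigma})\vw_k^\dagger$. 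Substituting both identities into the Bellman inequality for $\vw_k^\dagger$ yields exactly the feasibility constraint of \eqref{nash-program-2} evaluated at $(\vpi^{\vsigma},\vw^\dagger)$.

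It then remains to bound the objective $\sum_k \vrho^\top(\vw_k^\dagger-(\mathbf{I}-\gamma\pr(\vpi^{\vsigma}))^{-1}\vr_k(\vpi^{\vsigma}))$. The second term is the value vector $V_k^{\vpi^{\vsigma}}$, and the global zero-sum identity \eqref{zero-sum}, which holds for every (possibly correlated) strategy profile, gives $\sum_k V_k^{\vpi^{\vsigma}}(\vrho)=0$ and likewise $\sum_k V_k^{\vsigma}(\vrho)=0$. Hence the objective collapses to $\sum_k \vrho^\top\vw_k^\dagger=\sum_k V_k^{\dagger,\vsigma_{-k}}(\vrho)$, which the CCE inequalities, summed over $k$ and combined with $\sum_k V_k^{\vsigma}(\vrho)=0$, bound by $n\epsilon$. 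Thus $(\vpi^{\vsigma},\vw^\dagger)$ is an $O(\epsilon)$-approximate global minimizer, and the analogue of \Cref{theorem:ne-globopt} converts this into the statement that $\vpi^{\vsigma}$ is an $O(\epsilon)$-approximate NE.

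The main obstacle is the first step: proving the infinite-horizon version of \Cref{theorem:ne-globopt} without backward induction. The feasibility and objective-bounding steps are essentially unchanged from the finite-horizon proof, but the equivalence between global optima of \eqref{nash-program-2} and NE now rests on a fixed-point and contraction argument for the discounted Bellman operator, and one must verify that the resolvent $(\mathbf{I}-\gamma\pr(\vpi))^{-1}$ is well defined (guaranteed by $\gamma<1$ and row-stochasticity of $\pr(\vpi)$) so that the objective is finite and continuous on the feasible set.
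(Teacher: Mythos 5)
Your proposal is correct and follows essentially the same route as the paper: the same nonlinear program \eqref{nash-program-2b}, the same feasibility argument for $(\vpi^{\vsigma},\vw^{\dagger})$ from \Cref{assumption-poymatrix,assumption-single-contrl}, and the same bounding of the objective via the zero-sum identity together with the summed CCE inequalities. The ``genuinely new ingredient'' you flag---replacing backward induction by monotonicity of the $\gamma$-contractive Bellman operator and well-definedness of the resolvent $(\mathbf{I}-\gamma\pr(\vpi))^{-1}=\sum_{t\geq 0}\gamma^{t}\pr^{t}(\vpi)$---is precisely how the paper establishes \Cref{theorem:ne-globopt-2}, namely through the best-response LP of \Cref{lemma:br-prog-2} and repeated substitution into the constraint (its Neumann-series step).
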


\paragraph{Computational implications.}
Equilibrium collapse in infinite-horizon MG's allows us to use the CCE computation technique found in \citep{daskalakis2022complexity} in order to compute an $\epsilon$-approximate NE. Namely, given an accuracy threshold $\epsilon$, we truncate the infinite-horizon game to its \textit{effective horizon} $H \defeq \frac{\log(1/\epsilon)}{1-\gamma}$. Then, we define reward functions that depend on the time-step $h$, \textit{i.e.}, $r_{k,h} = \gamma^{h-1} r_k$. Finally, 
\begin{corollary}(Computing a NE---infinite-horizon)
    Given an infinite-horizon switching control zero-sum polymatrix game $\Gamma$, it is possible to compute a Nash equilibrium policy that is Markovian and nonstationary with proabiblity at least $1-\delta$ in time $\mathrm{poly}\left(n,H,S,\max_k|\calA_k|,\frac{1}{\epsilon}, \log(1/\delta)\right)$.
\end{corollary}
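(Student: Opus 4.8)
The plan is to replay the finite-horizon argument of \Cref{thm:finish} almost verbatim, with the program \eqref{nash-program-2} playing the role that \eqref{nash-program} played there, once we have established the infinite-horizon analogue of \Cref{theorem:ne-globopt}. Concretely, I would first prove that the $\epsilon$-approximate global minima of \eqref{nash-program-2} are in correspondence with the $n\epsilon$-approximate Nash equilibria of $\Gamma$. The key observation is that the reward term in the objective collapses: since $\sum_k \vr_k(\vpi) = \vzero$ by the zero-sum property \eqref{zero-sum} and the resolvent $(\mathbf{I}-\gamma\pr(\vpi))^{-1}$ is shared across players, we have $\sum_k (\mathbf{I}-\gamma\pr(\vpi))^{-1}\vr_k(\vpi) = \vzero$, so the objective equals $\sum_k \vrho^\top\vw_k$. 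The Bellman-type inequality constraints, together with the monotonicity of the discounted best-response operator, force $w_k(s) \geq V^{\dagger,\vpi_{-k}}_k(s)$ at every feasible point; hence the objective is at least $\sum_k V^{\dagger,\vpi_{-k}}_k(\vrho) \geq 0$, and it equals the aggregate Nash gap $\sum_k ( V^{\dagger,\vpi_{-k}}_k(\vrho) - V^{\vpi}_k(\vrho) )$ at the tightest $\vw$ (again using $\sum_k V^{\vpi}_k(\vrho)=0$). An $\epsilon$-approximate minimizer therefore certifies an $n\epsilon$-approximate NE, exactly as in \Cref{theorem:ne-globopt}.

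Given this correspondence, the collapse argument proceeds as before. Starting from an $\epsilon$-approximate CCE $\vsigma$, I would take $\vw^\dagger_k$ to be the best-response value vector of player $k$ against $\vsigma_{-k}$, i.e. the fixed point of the single-agent discounted Bellman operator induced by $\vsigma_{-k}$. The feasibility of $(\vpi^\sigma,\vw^\dagger)$ for \eqref{nash-program-2} rests on the two structural assumptions. By \Cref{assumption-poymatrix} the reward $r_k(s,a,\vsigma_{-k})$ is multilinear and decomposes as $\sum_{j\in\adj(s,k)} r_{kj}(s,a,\vpi^\sigma_j)$, so it depends on $\vsigma$ only through its marginals and thus coincides with $r_k(s,a,\vpi^\sigma_{-k})$; the correlations in $\vsigma$ never enter, precisely because no term couples two or more opponents of $k$ simultaneously. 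By \Cref{assumption-single-contrl} the transition $\pr(s,a,\vsigma_{-k})$ depends only on the action distribution of the unique controller $\argctrl(s)$, which is the same under $\vsigma_{-k}$ and under the product marginal, whence $\pr(s,a,\vsigma_{-k}) = \pr(s,a,\vpi^\sigma_{-k}) = \pr(s,a,\vpi^\sigma)$. These two identities turn the defining inequalities of $\vw^\dagger$ into exactly the constraints of \eqref{nash-program-2}.

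Finally, I would bound the objective. Since the reward term vanishes, the objective at $(\vpi^\sigma,\vw^\dagger)$ is $\sum_k \vrho^\top\vw^\dagger_k = \sum_k V^{\dagger,\vsigma_{-k}}_k(\vrho)$, and the $\epsilon$-CCE condition gives $V^{\dagger,\vsigma_{-k}}_k(\vrho) - V^{\vsigma}_k(\vrho) \leq \epsilon$ for every $k$; summing over $k$ and using $\sum_k V^{\vsigma}_k(\vrho)=0$ yields an objective value of at most $\epsilon$. Hence $(\vpi^\sigma,\vw^\dagger)$ is an $\epsilon$-approximate global minimum of \eqref{nash-program-2}, and the correspondence above certifies that the product marginal $\vpi^\sigma$ is an $n\epsilon$-approximate NE, which is the claimed collapse.

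I expect the main obstacle to be the first step --- the infinite-horizon analogue of \Cref{theorem:ne-globopt} --- rather than the structural collapse itself. Unlike the finite-horizon program, \eqref{nash-program-2} has no terminal boundary condition $w_{k,H}=0$ to anchor a backward-induction argument, so the inequality $w_k \geq V^{\dagger,\vpi_{-k}}_k$ must instead be obtained from the monotonicity and $\gamma$-contraction of the discounted Bellman operator and the nonnegativity of the Neumann series $\sum_{t\geq 0}\gamma^t \pr(\vpi)^t$. Once that fixed-point reasoning is in place, the structural identities supplied by \Cref{assumption-poymatrix,assumption-single-contrl} are inherited from the finite-horizon proof essentially unchanged.
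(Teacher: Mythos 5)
Your proposal proves the wrong statement: everything you outline establishes the \emph{equilibrium collapse} for infinite-horizon games (the content of \Cref{theorem:ne-globopt-2} and \Cref{thm:finish-2}, which the paper proves in the appendix essentially along the lines you describe), but the corollary at hand is a \emph{computational} claim --- that an approximate NE can be \emph{computed} in time $\mathrm{poly}(n,H,S,\max_k|\calA_k|,1/\epsilon,\log(1/\delta))$ with probability $1-\delta$. Nothing in your argument produces an approximate CCE to begin with, and nothing in it accounts for the running time or the failure probability $\delta$. The paper's proof is a reduction: it invokes the algorithm of \citep{daskalakis2022complexity} (their Theorem 4.2, which computes Markovian nonstationary $\epsilon$-approximate CCE policies in finite-horizon general-sum Markov games in polynomial time with probability $1-\delta$, and their Section 6.1 for the infinite-horizon reduction), and it is precisely from that result that the $\log(1/\delta)$ dependence and the high-probability qualifier in the corollary originate. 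Your proposal cites no such algorithmic ingredient, so as written it cannot yield the stated conclusion.

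A second, related omission: the corollary promises a policy that is Markovian and \emph{nonstationary}, which signals the truncation step you have skipped. The paper's route truncates the discounted game to its effective horizon $H \defeq \frac{\log(1/\epsilon)}{1-\gamma}$ and installs time-dependent rewards $r_{k,h} = \gamma^{h-1} r_k$; the CCE algorithm is run on this finite-horizon game, and the collapse is then applied there (the truncated game is still zero-sum polymatrix with switching control at every state and timestep, so \Cref{thm:finish} applies), with a truncation-error bound absorbing the tail $\sum_{h > H}\gamma^{h-1}$ into the $\epsilon$ accounting. By contrast, your argument works with the stationary-policy program \eqref{nash-program-2} directly, which would naturally certify a \emph{stationary} NE if you already had a stationary CCE in hand --- but the known polynomial-time CCE guarantees are for nonstationary Markov policies of the truncated game, which is exactly why the corollary is phrased nonstationarily. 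To repair the proposal, replace the first two paragraphs (which duplicate the appendix) with: (i) the effective-horizon truncation and its error bound, (ii) the invocation of \citep[Theorem 4.2]{daskalakis2022complexity} on the truncated game, and (iii) the finite-horizon collapse \Cref{thm:finish} applied to the resulting CCE.
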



\begin{proof}
    The Corollary follows by \citep[Section 6.1 and Theorem 4.2]{daskalakis2022complexity}.
\end{proof}


\section{Conclusion and open problems} 
In this paper, we unified switching-control Markov games and zero-sum polymatrix normal-form games. We highlighted how numerous applications can modelled using this framework and we focused on the phenomenon of equilibrium collapse from the set of coarse-correlated equilibria to that of Nash equilibria. This property holds implications for computing approximate Nash equilibria in switching control zero-sum polymatrix Markov games; it ensures that it can be done efficiently.

\paragraph{Open problems.}
In the light of the proposed problem and our results there are multiple open questions:
\begin{itemize}[leftmargin=2em,rightmargin=2em]
\item Is it possible to use a policy optimization algorithm similar to those of \citep{mansour,zhangframework} in order to converge to an approximate Nash equilibrium? We note that the question can be settled in one of two ways; \textit{either} extend the current result of equilibrium collapse to policies that are non-Markovian \textit{or} guarantee convergence to Markovian policies. 
The notion of \textit{regret} in \citep{mansour} gives rise to the computation of a CCE that is a non-Markovian policy in the sense that the policy at every timestep depends on the policy sampled from the history of no-regret play and not only the given state. 
\item Are there efficient algorithms for computing approximate Nash equilibria in zero-sum polymatrix Markov games if \Cref{assumption-single-contrl} does not hold? The fact that the set of CCE's does not collapse to the set of NE's  when \Cref{assumption-single-contrl} is violated  (\Cref{thm:nocollapsing}) does not exclude the possibility of having efficient algorithms.
\item We conjecture that a convergence rate of $O(\frac{1}{T})$ to a NE is possible, \textit{i.e.}, there exists an algorithm with running time $O(1/\epsilon)$ that computes an $\epsilon$-approximate NE.
\item Are there more classes of Markov games in which computing Nash equilibria is computationally tractable?
\end{itemize}

\newpage
\bibliography{main}



\newpage
\appendix
\section{Missing statements and proofs}

\subsection{Statements for \cref{sec:warmup}}
\begin{claim}
    \label{claim:2pl-marginal}
    Let a two-player Markov game where both players affect the transition. 
    Further, consider a correlated policy $\vsigma$ and its corresponding marginalized product policy $\vpi^{\vsigma} = \vpi_1^{\vsigma} \times \vpi_2^{\vsigma}$. Then, for any $\vpi_1',\vpi_2'$,
    \begin{align}
        &V^{\vpi_1', \vsigma_{-1}}_{k,1}(s_1) = V^{\vpi_1', \vpi_2^{\vsigma} }_{k,1}(s_1), \\
        &V^{\vsigma_{-2}, \vpi_2' }_{k,2}(s_1) = V^{\vpi_1^{\vsigma}, \vpi_2' }_{k,2}(s_1).
    \end{align}
\end{claim}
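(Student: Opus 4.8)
The plan is to reduce both identities to a single structural observation: in a two-player game, the marginal joint policy of ``everyone but player~$1$'' is literally player~$2$'s marginalized policy. Concretely, I would first establish that $\vsigma_{-1}\equiv\vpi_2^{\vsigma}$ and, symmetrically, $\vsigma_{-2}\equiv\vpi_1^{\vsigma}$ as policies; once this is in hand, each side of each claimed equality is the value function of one and the same joint policy, so the two coincide at every state and timestep, in particular at $s_1$.

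For the first step I would simply unwind the definition of the marginal joint policy. Since there are only two players, $\calA_{-1}=\calA_2$, and for every $s$, $h$, and $a_2\in\calA_2$,
\[
\vsigma_{-1,h}(a_2\mid s)=\sum_{a_1\in\calA_1}\sigma_h(a_1,a_2\mid s)=\pi_{2,h}^{\vsigma}(a_2\mid s).
\]
Next I would observe that a unilateral, unconditional deviation of player~$1$ to the product policy $\vpi_1'$ makes player~$1$'s action independent of player~$2$'s, so the distribution over joint actions induced by $\vpi_1'\times\vsigma_{-1}$ at state $s$ and time $h$ is $\pi_{1,h}'(a_1\mid s)\,\vsigma_{-1,h}(a_2\mid s)=\pi_{1,h}'(a_1\mid s)\,\pi_{2,h}^{\vsigma}(a_2\mid s)$, which is exactly the distribution induced by the genuine product policy $\vpi_1'\times\vpi_2^{\vsigma}$.

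To conclude, I would invoke the fact that the value function depends on the joint policy only through the per-timestep transition matrices $\pr_\tau(\cdot)$ and reward vectors $\vr_{k,h}(\cdot)$, each of which is a linear functional of the induced state-action distribution. Since these induced distributions coincide timestep by timestep, the matrices $\pr_\tau$ and vectors $\vr_{k,h}$ agree for the two policies, and hence so does the full value expression $\ve_{s_1}^\top\sum_h\big(\prod_{\tau}\pr_\tau(\cdot)\big)\vr_{k,h}(\cdot)$ for every player $k$. This yields $V^{\vpi_1',\vsigma_{-1}}_{k,1}(s_1)=V^{\vpi_1',\vpi_2^{\vsigma}}_{k,1}(s_1)$; the second identity follows verbatim after swapping the roles of players~$1$ and~$2$ and using $\vsigma_{-2}\equiv\vpi_1^{\vsigma}$.

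The only point that needs care, rather than being a genuine obstacle, is the semantics of deviating against a correlated policy: I must argue that the unconditional deviation of player~$1$ truly factorizes as a product with player~$2$'s marginal. Crucially, this hinges on there being a \emph{single} opponent, so that $\vsigma_{-1}$ is already an independent policy. With three or more players $\vsigma_{-k}$ is genuinely correlated and does not reduce to a product of marginals, which is precisely why the general polymatrix statement (\Cref{thm:finish}) must additionally invoke the switching-controller assumption to neutralize the transition term.
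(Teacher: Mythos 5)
Your proposal is correct and takes essentially the same route as the paper: the paper also reduces the claim to the equivalence of the induced best-response MDPs, checking that the deviator's effective rewards $\bar{r}_{k,h}$ and transitions $\bar{\pr}_{k,h}$ against $\vsigma_{-k}$ coincide with those against the opponent's marginal, which is precisely your observation that in the two-player case $\vsigma_{-1}\equiv\vpi_2^{\vsigma}$ and hence the two joint policies induce identical per-timestep transition matrices and reward vectors. Your closing remark on why the argument hinges on a single opponent (and fails for $n\geq 3$, necessitating the switching-controller assumption) matches the paper's own rationale.
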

\begin{proof}

    We will effectively show that the problem of best-responding to a correlated policy $\vsigma$ is equivalent to best-responding to the marginal policy of $\vsigma$ for the opponent. The proof follows from the equivalence of the two MDPs.
    
    As a reminder,
    \begin{align}
        & \pi_{1,h}(a|s) = \sum_{ b \in \calA_2 } \vsigma_{h} (a,b|s) \\
        & \pi_{2,h}(b|s) = \sum_{ a \in \calA_1 } \vsigma_{h} (a,b|s) 
    \end{align}

    As we have seen in \Cref{paragraph:best-response}, in the case of unilateral deviation from joint policy $\vsigma$, an agent faces a single agent MDP. More specifically, agent $2$, best-responds by optimizing a reward function $\bar{r}_{2,h} (s, b)$ under a transition kernel $\bar{\pr}_2$ for which,
    \begin{align}
        \bar{r}_{2,h}(s,b) = \E_{b\sim \vsigma} \left[ r_{2,h}(s, a, b )\right] = \E_{b\sim \vpi_1^{\vsigma} } \left[ r_{2,h}(s, a, b )\right] = r_{2,h}(s,\vpi_1^{\vsigma},b).
    \end{align}
    Similarly,
    \begin{equation}
        \bar{r}_{1,h}(s,b) =  r_{1,h}(s,a,\vpi_2^{\vsigma}).
    \end{equation}

    Analogously, for each of the transition kernels,
    \begin{align}
        \bar{\pr}_{2,h} (s'|s,b)= \E_{a \sim \vsigma} \left[ \pr_{2,h} (s'|s, a, b )\right]  = \E_{a \sim \vpi_2^{\vsigma}} \left[ \pr_{2,h} (s'|s, a, b )\right] = \pr_{2,h} (s'|s, \vpi_1^{\vsigma}, b),
    \end{align}
    as for agent $1$,
    \begin{align}
        \bar{\pr}_{1,h}(s'|s,a) = \pr_{1,h}(s' | s, a, \vpi_{2}^{\vsigma}).
    \end{align}

    Hence, it follows that, $V_{2,1}^{{\vsigma}_{-2}\times \vpi'_2}(s_1) = V_{2,1}^{\vpi_1^{\vsigma}\times \vpi'_2}(s_1), ~ \forall \vpi'_2$ and $V_{1,1}^{ \vpi'_1\times \vsigma_{-1}} (s_1) = V_{1,1}^{\vpi'_1\times \vpi_2^{\vsigma}} (s_1), ~ \forall \vpi'_2$.
    



    

    
\end{proof}
Before that, given a (possibly correlated) joint policy $\vsigma$  we define a nonlinear program, \eqref{br-program}, whose optimal solutions are best-response policies of each agent $k$ to $\vsigma_{-k}$ and the values for each state $s$ and timestep $h$:

\subsection{Proof of \cref{theorem:ne-globopt}}

\paragraph{The best-response program.}
        First, we state the following lemma that will prove useful for several of our arguments,    \begin{lemma}[Best-response LP]
        Let a (possibly correlated) joint policy $\hat\vsigma$. Consider the following linear program  with variables $\vw\in \R^{n\times H \times S}$ ,
        
\begin{subequations}
        \makeatletter
        \def\@currentlabel{$\brprogram$}
        \makeatother
        \label{br-program}
        \renewcommand{\theequation}{$\brprogram$.\arabic{equation}}
    \begin{tagblock}[tagname={$\brprogram$},content={\label{prog:br-program}}]
            \begin{alignat}{3}
&\min & \mathllap{\textstyle ~\sum\limits_{k\in[n]}w_{k,s}(s_1) - \ve_{s_1}^\top \sum\limits_{h=1}^H  \left(\prod\limits_{\tau = 1}^h \pr_\tau({\hat\vsigma}_{\tau})\right) \vr_{k,h}({\hat\vsigma}_{h}) } \\
&\mathrm{s.t.}~&\textstyle w_{k,h}(s) \geq r_{k,h}(s,a,\hat\vsigma_{-k,h}) + \pr_h(s,a,\hat\vsigma_{-k,h}) \vw_{k,h+1}, \\
& &  \textstyle \quad \forall s\in\calS, \forall h \in [H], \forall k \in [n],\forall a \in \calA_k; \\
& & \textstyle w_{k,H}(s) = 0,\; \forall k \in [n], \forall s \in \calS .\\
\end{alignat}
\end{tagblock}
\end{subequations}
        The optimal solution $\vw^\dagger$ of the program is unique and corresponds to the value function of each player $k\in[n]$ when player $k$ best-responds to $\hat{\vsigma}$.
        \label{lemma:br-prog}
    \end{lemma}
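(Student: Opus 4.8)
The plan is to recognize \eqref{br-program} as the classical dynamic-programming linear program that characterizes the optimal value function of a finite-horizon Markov decision process, and to prove the claim by the standard monotone backward-induction argument rather than by LP duality. First I would observe that the program \emph{decouples across players}: for each fixed $k$ the constraints involve only the variables $\{w_{k,h}(s)\}_{h,s}$, and the subtracted quantity $\ve_{s_1}^\top\sum_{h=1}^H\left(\prod_{\tau=1}^h\pr_\tau(\hat\vsigma_\tau)\right)\vr_{k,h}(\hat\vsigma_h)$ is a constant, namely the value $V^{\hat\vsigma}_{k,1}(s_1)$ of the fixed policy $\hat\vsigma$. Hence minimizing the objective is equivalent to minimizing $w_{k,1}(s_1)$ separately for each player over the feasible set cut out by the Bellman inequalities together with the terminal condition $w_{k,H}(s)=0$.

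Next I would encode the family of constraints indexed by $a\in\calA_k$ as a single Bellman inequality. Writing $(\calB_{k,h}v)(s)\defeq\max_{a\in\calA_k}\{r_{k,h}(s,a,\hat\vsigma_{-k,h})+\pr_h(s,a,\hat\vsigma_{-k,h})\,v\}$ for the best-response Bellman operator of the single-agent MDP that player $k$ faces when the others are frozen at $\hat\vsigma_{-k}$, the constraints of \eqref{br-program} read exactly $w_{k,h}(s)\ge(\calB_{k,h}\vw_{k,h+1})(s)$ for all $s,h$, together with $w_{k,H}=0$. By definition (see \Cref{paragraph:best-response}), the best-response value function $V^{\dagger,\hat\vsigma_{-k}}_{k,\cdot}$ is the solution of the Bellman optimality recursion $V^{\dagger,\hat\vsigma_{-k}}_{k,h}=\calB_{k,h}V^{\dagger,\hat\vsigma_{-k}}_{k,h+1}$ with terminal value $V^{\dagger,\hat\vsigma_{-k}}_{k,H}=0$.

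The two core steps are a lower bound and a feasibility check. For the lower bound, I would use that each $\calB_{k,h}$ is monotone nondecreasing in its vector argument, because $\pr_h(s,a,\cdot)\ge 0$ entrywise, and run a backward induction on $h$: the base case $w_{k,H}=0=V^{\dagger,\hat\vsigma_{-k}}_{k,H}$ holds by the terminal constraint, and if $\vw_{k,h+1}\ge V^{\dagger,\hat\vsigma_{-k}}_{k,h+1}$ pointwise then $w_{k,h}(s)\ge(\calB_{k,h}\vw_{k,h+1})(s)\ge(\calB_{k,h}V^{\dagger,\hat\vsigma_{-k}}_{k,h+1})(s)=V^{\dagger,\hat\vsigma_{-k}}_{k,h}(s)$. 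Thus every feasible $\vw$ dominates $V^{\dagger,\hat\vsigma_{-k}}$ coordinatewise, and in particular $w_{k,1}(s_1)\ge V^{\dagger,\hat\vsigma_{-k}}_{k,1}(s_1)$. Conversely, the Bellman optimality equations show that setting $w_{k,h}(s)=V^{\dagger,\hat\vsigma_{-k}}_{k,h}(s)$ satisfies every inequality, with equality at a maximizing action, as well as the terminal condition; hence $V^{\dagger,\hat\vsigma_{-k}}$ is feasible and attains the lower bound. Combining the two directions, $V^{\dagger,\hat\vsigma_{-k}}$ is the pointwise-smallest feasible solution and therefore minimizes the coordinatewise nondecreasing objective, giving $\vw^\dagger=V^{\dagger,\hat\vsigma_{-k}}$ as claimed.

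The step I expect to be the most delicate is the precise sense in which the optimizer is \emph{unique}. Because the objective charges only the single coordinate $w_{k,1}(s_1)$, there can in principle be feasible points that are optimal yet differ from $V^{\dagger,\hat\vsigma_{-k}}$ on states or stages irrelevant to the value at $(1,s_1)$; so uniqueness should be understood as uniqueness of the pointwise-minimal feasible solution, which the preceding monotonicity argument pins down exactly. I would therefore make the statement precise by identifying $\vw^\dagger$ with that minimal feasible point: it is the only feasible solution dominated coordinatewise by every other feasible solution, and it coincides with the best-response value function on all $(k,h,s)$, which is the canonical object invoked in the subsequent proofs.
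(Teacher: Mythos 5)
Your argument is correct and, at its core, takes the same route as the paper's proof: both rest on the observation that the program separates into $n$ independent linear programs, one per player $k$, each of which is the dynamic-programming LP of the single-agent MDP that player $k$ faces when the opponents are frozen at $\hat\vsigma_{-k}$ (the subtracted term being a constant equal to $V^{\hat\vsigma}_{k,1}(s_1)$). Where the paper simply cites \citep[Section 2]{neu2020unifying} for the equivalence of that LP with the best-response problem, you prove it from scratch by the monotone backward induction on the Bellman operator---every feasible point dominates $V^{\dagger,\hat\vsigma_{-k}}_{k}$ coordinatewise, and $V^{\dagger,\hat\vsigma_{-k}}_{k}$ is itself feasible---which makes the lemma self-contained; this is the honest content behind the citation rather than a new idea. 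Your closing caveat about uniqueness, however, is a genuine improvement on the paper, whose one-line justification (``each program corresponds to a set of Bellman optimality equations'') does not actually establish uniqueness of the LP optimizer: since the objective charges only the coordinates $w_{k,1}(s_1)$, a feasible point can be inflated at any state--stage pair unreachable from $s_1$, or reachable only under actions whose constraints retain slack, without changing the objective, so the optimal face of the LP is not a singleton in general. Uniqueness holds exactly in the sense you propose---$\vw^\dagger$ is the unique coordinatewise-minimal feasible point, and it coincides with the best-response value function at every $(k,h,s)$---and that is also precisely the form in which the lemma is used downstream (the feasibility of $(\vpi^{\vsigma},\vw^\dagger)$ in \Cref{thm:finish} and the final inequality in the proof of \Cref{theorem:ne-globopt}), so your reformulation is the correct reading of the statement.
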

    \begin{proof}
    We observe that the program is separable to $n$ independent linear programs, each with variables $\vw_k \in \R^{n\times H}$,
    \begin{alignat}{2}
        \mathrlap{\textstyle \min~  w_{k,1}(s_1) } \\
        &\mathrm{s.t.}~&&\textstyle w_{k,h}(s) \geq r_{k,h}(s,a,\hat{\vsigma}_{-k,h}) + \pr_h(s,a,\hat{\vsigma}_{-k,h}) \vw_{k,h+1}, \\
        & &&  \textstyle \quad \forall s\in\calS, \forall h \in [H], \forall a \in \calA_k; \\
        & && \textstyle w_{k,H}(s) = 0,\; \forall k \in [n], \forall s \in \calS.
    \end{alignat}
    Each of these linear programs describes the problem of a single agent MDP \citep[Section 2]{neu2020unifying} ---that agent being $k$--- which, as we have seen in \nameref{paragraph:best-response}, is equivalent to the problem of finding a best-response to $\hat\vsigma_{-k}$. It follows that the optimal $\vw^\dagger_k$ for every program is unique (each program corresponds to a set of Bellman optimality equations).
    \end{proof}

    \paragraph{Properties of the NE program.}
    Second, we need to prove that the minimum value of the objective function of the program is nonnegative.
    \begin{lemma}[Feasibility of \eqref{nash-program-2b} and global optimum]
        The nonlinear program \eqref{nash-program-2b} is feasible, has a nonnegative objective value, and its global minimum is equal to $0$.
    \end{lemma}
    \begin{proof}
    
    Analogously to the finite-horizon case, for the feasibility of the nonlinear program, we invoke the theorem of the existence of a Nash equilibrium. We let a NE product policy, $\vpi^\star$, and a vector $\vw^\star\in \R^{n\times S}$ such that $w_{k}^\star(s) = V^{\dagger, \vpi^\star_{-k}}_{k}(s),~\forall k\in [n]\times \calS$.

    By \Cref{lemma:br-prog}, we know that $(\vpi^\star,\vw^\star)$ satisfies all the constraints of \eqref{nash-program}. Additionaly, because $\vpi^\star$ is a NE, $V^{\vpi^\star}_{k,h}(s_1) = V^{\dagger, \vpi_{-k}^\star}_{k,h}(s_1)$ for all $k\in[n]$. Observing that,
    \begin{align}
    & w_{k,1}^\star (s_1) - \ve_{s_1}^\top \sum\limits_{h=1}^H  \left( \prod\limits_{\tau = 1}^h \pr_\tau(\vpi^\star_{\tau}) \right) \vr_{k,h}(\vpi^\star_{h})  =  V^{\dagger, \vpi_{-k}^\star}_{k,h}(s_1)  - V^{\vpi^\star}_{k,h}(s_1) = 0,\end{align}
    concludes the argument that a NE attains an objective value equal to $0$.
    
    Continuing, we observe that due to \eqref{zero-sum} the objective function can be equivalently rewritten as,
        \begin{align}
            \sum\limits_{k\in[n]}& \left( w_{k,1}(s_1) -  \ve_{s_1}^\top \sum\limits_{h=1}^H   \left( \prod\limits_{\tau = 1}^h \pr_\tau(\vpi_{\tau}) \right) \vr_{k,h}(\vpi_{h}) \right)   \\
            &=\sum\limits_{k\in[n]}  w_{k,1}(s_1) - \ve_{s_1}^\top \sum\limits_{h=1}^H \left( \prod\limits_{\tau = 1}^h \pr_\tau(\vpi_{\tau}) \right) \sum\limits_{k\in[n]}  \vr_{k,h}(\vpi_{h})  \\
            &=\sum\limits_{k\in[n]}  w_{k,1}(s_1) .
        \end{align}
    Next, we focus on the inequality constraint
    \begin{equation}
        w_{k,h}(s) \geq r_{k,h}(s,a,\vpi_{-k,h}) + \pr_h(s, a,\vpi_{-k,h}) \vw_{k,h+1}
    \end{equation}
    which holds for all $s\in\calS$, all players $k\in[n]$, all $a\in \calA_k$, and all timesteps $h\in [H-1]$.

    By summing over $a \in \calA_k$ while multiplying each term with a corresponding coefficient $\pi_{k,h}(a|s)$, the display written in an equivalent element-wise vector inequality reads:
    \begin{equation}
        \vw_{k,h} \geq \vr_{k,h}(\vpi_{h}) + \pr_h(\vpi_{h}) \vw_{k,h+1}.
    \end{equation}
    Finally, after consecutively substituting $\vw_{k,h+1}$ with the element-wise lesser term $\vr_{k,h+1}(\vpi_{h+1}) + \pr_{h+1}(\vpi_{h+1}) \vw_{k,h+2}$, we end up with the inequality:
    \begin{align}
        \vw_{k,1} \geq \sum\limits_{h=1}^H \left( \prod\limits_{\tau = 1}^h \pr_\tau(\vpi_{\tau}) \right) \vr_{k,h}(\vpi_{h}). \label{eq:induction}
    \end{align}
    Summing over $k$, it holds for the $s_1$-th entry of the inequality,
    \begin{align}
        \sum_{k\in[n]} w_{k,1} \geq \sum_{k\in[n]}\sum\limits_{h=1}^H  \left ( \prod\limits_{\tau = 1}^h \pr_\tau(\vpi_{\tau}) \right) \vr_{k,h}(\vpi_{h}) =0.
    \end{align}
    Where the equality holds due to the zero-sum property, \eqref{zero-sum}.
    \end{proof}

    \paragraph{An approximate NE is an approximate global minimum.}
    We show that an $\epsilon$-approximate NE, $\vpi^\star$, achieves an $n\epsilon$-approximate global minimum of the program. Utilizing \Cref{lemma:br-prog}, setting $w_k^\star(s_1) = V_{k,1}^{\dagger,\vpi^\star_{-k}}(s_1)$, and the definition of an $\epsilon$-approximate NE we see that,
    \begin{align}
        \sum \limits_{k\in[n]}\left( w_{k,1}^\star(s_1) - \ve_{s_1}^\top \sum\limits_{h=1}^H  \left( \prod\limits_{\tau = 1}^h \pr_\tau(\vpi_{\tau}^\star) \right) \vr_{k,h}(\vpi_{h}^\star) \right)   
        & = \sum_{k\in[n]} \left( w_{k,1}^\star(s_1) - V^{\vpi^\star}_{k,1}(s_1) \right) \\
        & \leq \sum_{k\in [n]} \epsilon  = n \epsilon.
    \end{align}
    Indeed, this means that $\vpi^\star, \vw^\star$ is an $n\epsilon$-approximate global minimizer of \eqref{nash-program}.

    \paragraph{An approximate global minimum is an approximate NE.}
    For the opposite direction, we let a feasible $\epsilon$-approximate global minimizer of the program \eqref{nash-program}, $(\vpi^\star, \vw^\star)$. Because a global minimum of the program is equal to $0$, an $\epsilon$-approximate global optimum must be at most $\epsilon>0$. We observe that for every $k \in [n]$,
    \begin{align}
        w_{k,1}^\star(s_1) \geq \ve_{s_1}^\top \sum\limits_{h=1}^H  \left(\prod\limits_{\tau = 1}^h \pr_\tau(\vpi_{\tau}^\star)\right) \vr_{k,h}(\vpi_{h}^\star) \label{eq:ineq1},
    \end{align}
    which follows from induction on the inequality constraint over all $h$ similar to \eqref{eq:induction}.

    Consequently, the assumption that
     \begin{align}
        \epsilon{\geq} \sum \limits_{k\in[n]}& \left( w_{k,1}^\star(s_1) - \ve_{s_1}^\top \sum\limits_{h=1}^H  \left( \prod\limits_{\tau = 1}^h \pr_\tau(\vpi_{\tau}^\star) \right) \vr_{k,h}(\vpi_{h}^\star) \right),
    \end{align}
    and \cref{eq:ineq1}, yields the fact that
    \begin{align}
        \epsilon &\geq w_{k,1}^\star(s_1) - \ve_{s_1}^\top \sum\limits_{h=1}^H  \left( \prod\limits_{\tau = 1}^h \pr_\tau(\vpi_{\tau}^\star) \right) \vr_{k,h}(\vpi_{h}^\star) \\
        &\geq V^{\dagger,\vpi^\star_{-k}}_{k,1}(s_1) - V^{\vpi^\star}_{k,1}(s_1),
    \end{align}
    where the second inequality holds from the fact that $\vw^\star$ is feasible for \eqref{br-program}. The latter concludes the proof, as the display coincides with the definition of an $\epsilon$-approximate NE.

\section{Extra remarks for finite-horizon games}
\subsection{Poly-time computation Nash equilibrium policies in zero-sum Polymatrix Markov games}
\begin{theorem}
    Computing a NE non-stationary Markov policy in a finite-horizon zero-sum polymatrix Markov game with switching control takes polynomial time in $H \times S , ( A \times n)$.
\end{theorem}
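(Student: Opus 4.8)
The plan is to compute an exact NE by backward induction over the horizon, solving a single zero-sum polymatrix normal-form game at each state and stage. First I would initialize the continuation values $V_{k,H}(s) = 0$ for all $k \in [n]$ and $s \in \calS$, consistent with the terminal constraint $w_{k,H}(s)=0$ of \eqref{nash-program}. Then, proceeding from $h = H-1$ down to $h = 1$ and for each state $s$, I would assemble an \emph{augmented stage game}: a one-shot $n$-player game in which the payoff of player $k$ on joint action $\va$ is the immediate polymatrix reward $r_{k,h}(s,\va) = \sum_{j \in \adj(s,k)} r_{kj,h}(s,a_k,a_j)$ plus the continuation term $g_k(a_{\argctrl(s)}) \defeq \sum_{s'} \pr_h(s'|s, a_{\argctrl(s)})\, V_{k,h+1}(s')$.

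The first key observation is that, by \Cref{assumption-single-contrl}, the continuation term $g_k$ depends only on the controller's action $a_{\argctrl(s)}$, so it can be folded into the polymatrix structure as contributions on edges incident to the controller $c = \argctrl(s)$ (an edge $(k,c)$ whose payoff to $k$ is constant in $a_k$, together with the controller's own single-action term). Thus the augmented stage game remains polymatrix. The second key observation is that it remains globally zero-sum: since $\sum_k r_{k,h}(s,\va) = 0$ by \eqref{zero-sum} and $\sum_k g_k(a_c) = \sum_{s'} \pr_h(s'|s,a_c)\sum_k V_{k,h+1}(s')$, the inductive hypothesis $\sum_k V_{k,h+1}(s') = 0$ forces $\sum_k g_k(a_c) = 0$ for every $a_c$. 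Hence the augmented game is a zero-sum polymatrix game, so its Nash equilibria coincide with the optima of the linear program of \citep{cai2016zero} and can be computed exactly in polynomial time; I would set $\vpi_{\cdot,h}(s)$ to the equilibrium strategies and $V_{k,h}(s)$ to the equilibrium values. Because the augmented game is zero-sum for every profile, the equilibrium payoffs again sum to zero, so $\sum_k V_{k,h}(s)=0$ and the inductive invariant is preserved (the base case $h=H$ holds trivially).

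To conclude that the assembled profile $\vpi$ is a NE of the Markov game, I would invoke the one-stage deviation principle for finite-horizon dynamic games: the continuation values $V_{k,h+1}$ capture exactly the expected reward of player $k$ from stage $h+1$ onward under $\vpi$, and a unilateral deviation at a single $(s,h)$ alters only the immediate reward and --- when the deviating player is the controller --- the transition, both of which are encoded in the augmented stage game. Since $\vpi_{\cdot,h}(s)$ is a NE of that game, no profitable one-stage deviation exists at any $(s,h)$, so $\vpi$ is subgame-perfect and therefore a NE. Finally, for the running time, there are $H \cdot S$ augmented stage games, each with at most $\binom{n}{2}$ edges and action sets of size at most $A \defeq \max_k |\calA_k|$; building the continuation terms costs $O(S \cdot A)$ per game and solving each zero-sum polymatrix LP is polynomial in $n$ and $A$, giving a total running time polynomial in $H\cdot S$ and $A\cdot n$.

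The main obstacle I anticipate is the bookkeeping that certifies the augmented stage game is \emph{exactly} a zero-sum polymatrix game in the sense required by \citep{cai2016zero} --- in particular, cleanly encoding the single-action continuation terms $g_k(a_c)$ within the edge structure while preserving both the polymatrix decomposition and the global zero-sum identity. Everything downstream (the LP solvability, the inductive zero-sum invariant, and the one-stage deviation argument) is then routine; the subtlety is entirely in showing that the switching-controller assumption is precisely what keeps each stage game inside the tractable class.
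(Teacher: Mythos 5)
Your proposal is correct and takes essentially the same route as the paper's own proof: backward induction in which, thanks to the switching-controller assumption, the stage-$(h,s)$ problem reduces to a zero-sum polymatrix normal-form game---the continuation term $\pr_h(s,a,\vpi_{-k,h})\,\vw^\star_{k,h+1}$ depends only on the controller's action and thus stays within the pairwise structure, while the zero-sum invariant $\sum_k \vw^\star_{k,h+1} = \bm{0}$ keeps the stage game globally zero-sum---so each stage is solvable in polynomial time via the linear program of \citep{cai2016zero}. Your explicit folding of the continuation terms $g_k(a_{\argctrl(s)})$ into edges incident to the controller and your appeal to the one-stage deviation principle merely make precise bookkeeping that the paper leaves implicit.
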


\begin{proof}
    We prove the theorem using induction, \textit{i.e.,} we show that for a horizon of size $2$, the game corresponds to a normal-form two-player zero-sum games in every state. Next, we show that if we assume that we have solved the game $\Gamma_{h^+}$ --- \textit{i.e.}, the game that begins at time state $h$ and continues up to $H$ --- we can compute the strategies to be played at any state for timestep $h-1$ for in polynomial time.

    \paragraph{Auxiliary notation.} We define $\Gamma_{h^+}$ to be the game played starting at timestep $h$ and lasting until step $H$. Of course, the initial game is $\Gamma \defeq \Gamma_{1^+}$.

    \paragraph{Base case.}
    The base case is when the time has a horizon of $h=2$ and one, effectively, has to solve a zero-sum polymatrix game for every state.

    \paragraph{Hypothesis.}
    For an arbitrary $h$, assume a Nash equilibrium for every state of the game played from time $h$ and onwards, \textit{i.e.,} let policies $\vpi_{k, h^+} \in \Delta(\calA_k)^{S \times (H-h)}, ~\forall k \in [n],$
    \begin{align}
       & V_{k,h}^{\vpi_{k,h}^\star, \vpi_{-k,h}^\star}(s) \geq V_{k,h}^{\dagger, \vpi_{-k,h}^\star}(s), ~ \forall k \in [n], \forall s\in \calS.
    \end{align}

    We will prove that backwards induction works in polynomial time since it only means solving a linear program for every timestep of the horizon.

    \paragraph{Inductive step.}

    A program for computing Nash equilibria reads,
    
    \begin{alignat}{3}
        &\mathrm{min}~ && \sum_{k\in[n]}\sum_{s\in \calS } \left( w_{k, t-1}(s) - r_{k,t}(s, \vpi_{k,t-1}) - \pr_{t-1}(s, \vpi_{k,t-1} )\vw_{k,t+1}^\star \right) \\
        &\mathrm{s.t.}~&& w_{k, t}(s) \geq r_{k,t}(s, a, \vpi_{-k,t}) + \pr (s, a, \vpi_{-k,t} ) w_{k,t+1}
        ~ \forall k \in [n], \forall a \in \calA_k, \forall t \leq h-1, \forall s \in \calS.
    \end{alignat}

    For, $t \geq h$, we let $\vw^\star_{k, t}(s) = V^{\star}_{k, t}(s), ~\forall k \in [n], \forall s \in \calS, \forall t \geq h  $. By the fact that, for $t\geq h$ $w_{k,t} (\cdot)= V_{k,t}^\star(\cdot)$ we get,

    \begin{equation}
        \sum_{k\in[n]}\sum_{s\in \calS } \left( w_{k, t-1}(s) - r_{k,t}(s, \vpi_{t-1}) - \pr_{t-1}(s, \vpi_{t-1} )\vw_{k,t+1}^\star \right) = \sum_{k\in[n]}\sum_{s\in \calS }  w_{k, t-1}(s).
    \end{equation}


    Hence, the program simplifies as,
    \begin{alignat}{3}
        &\mathrm{min}~ && \sum_{k\in[n]}\sum_{s\in \calS }  w_{k, h-1}(s) \\
        &\mathrm{s.t.}~&& w_{k, h-1}(s) \geq r_{k,h-1}(s, a, \vpi_{-k,h-1}) + \pr_{h-1} (s, a, \vpi_{-k,h-1} ) \vw_{k,h}^\star,
        ~ \forall k \in [n], \forall a \in \calA_k, \forall s \in \calS.
    \end{alignat}

    We observe that due to the fact that the game $\Gamma_{h^+}$ is zero-sum, $\sum_{k\in[n]}\vw^\star_{k,h} = \bm{0}$. And, similarly, $\sum_{k\in[n]}r_{k,h-1} (s,\va) = 0, ~\forall \va \in \calA$. Further, the term $\pr_{h-1} (s, a, \vpi_{-k,h-1} ) \vw_{1,h}^\star$  is linear in $\vpi_{-k,h-1}$ due to the switching control assumption. This means that the program corresponds to a zero-sum polymatrix game and its Nash equilibria are computable in time that is polynomial in the parameters of the program.
\end{proof}

The same conclusion holds for finite-horizon two-player zero-sum Markov games without the switching control assumption. The only difference in the proof is that the linearity of the term $\pr_{h-1} (s, a, \vpi_{-k,h-1} ) \vw_{1,h}^\star$ in the variables of the program is due to the fact that the players are only two.

\section{Proofs for Infinite-horizon Zero-Sum Polymatrix Markov Games}

In this section we will explicitly state definitions, theorems and proofs relating to the infinite-horizon discounted zero-sum polymatrix Markov games. 

\subsection{Definitions of equilibria for the infinite-horizon}
Let us restate the definition specifically for infinite-horizon Markov games. They are defined as a tuple $\Gamma( H, \calS, \{ \calA_k \}_{k \in [n]}, \pr, \{ r_k\}_{k\in[n]}, \gamma, \vrho )$.
 
\begin{itemize}
    \item $H =\infty $ denotes the \textit{time horizon}
    \item $\calS$, with cardinality $S\defeq |\calS|$, stands for the state space,
    \item $\{ \calA_k \}_{k \in [n]}$ is the collection of every player's action space, while $\calA\defeq \calA_1\times \cdots \times \calA_n$ denotes the \textit{joint action space}; further, an element of that set ---a joint action--- is generally noted as $\va = (a_1, \dots, a_n) \in \calA$,
    \item  $\pr : \calS \times \calA \to \Delta(\calS)$ is the transition probability function,
    \item $r_{k}:\calS, \calA \to [-1,1]$ yields the reward of player $k$ at a given state and joint action,
    \item a discount factor $0<\gamma<1$,
    \item an initial state distribution $\vrho \in \Delta(\calS)$.
\end{itemize}

\paragraph{Policies and value functions.}
In infinite-horizon Markov games policies can still be distinguished in two main ways, \textit{Markovian/non-Markovian} and \textit{stationary/nonstationary}. Moreover, a joint policy can be a \textit{correlated} policy or a \textit{product} policy.

\textit{Markovian} policies attribute a probability over the simplex of actions solely depending on the running state $s$ of the game. On the other hand, \textit{non-Markovian} policies attribute a probability over the simplex of actions that depends on any subset of the history of the game. \textit{I.e.}, they can depend on any sub-sequence of actions and states up until the running timestep of the horizon.

\textit{Stationary} policies are those that will attribute the same probability distribution over the simplex of actions for every timestep of the horizon. \textit{Nonstationary} policies, on the contrary can change depending on the timestep of the horizon.

A joint Markovian stationary policy $\vsigma$ is said to be \textit{correlated} when for every state $s\in\calS$, attributes a probability distribution over the simplex of joint actions $\calA$ for all players, \textit{i.e.}, $\vsigma(s)\in \Delta(\calA)$. A Markovian stationary policy $\vpi$ is said to be a \textit{product} policy when for every $s\in\calS$, $\vpi(s) \in \prod_{k=1}^n \Delta(\calA_k)$. It is rather easy to define \textit{correlated/product} policies for the case of non-Markovian and nonstationary policies.

Given a Markovian stationary policy $\vpi$, the value function for an infinite-horizon discounted game is defined as,
\begin{align}
    \displaystyle
    V_{k}^{\vpi}(s_1) &= \E_{\vpi} \left[ \sum_{h=1}^H \gamma^{h-1} r_{k,h}(s_h, \va_h) \big| s_1 \right] = \ve_{s_1}^\top \sum_{h=1}^H  \left( \gamma^{h-1} \prod_{\tau = 1}^h \pr_\tau(\vpi_{\tau}) \right) \vr_{k,h}(\vpi_{h}).
\end{align}

It is possible to express the value function of each player $k$ in the following way,
\begin{align}
     V_{k}^{\vpi}(s_1) &= \ve_{s_1}^\top \left( \mathbf{I} - \gamma \pr(\vpi) \right)^{-1} \vr(\vpi).
\end{align}
Where $\mathbf{I}$ is the identity matrix of appropriate dimensions. Also, when the initial state is drawn from the initial state distribution, we denote, the value function reads $V_{k}^{\vpi}(\vrho) =  \vrho^\top \left( \mathbf{I} - \gamma \pr(\vpi) \right)^{-1} \vr(\vpi)$.

\paragraph{Best-response policies.} Given an arbitrary joint policy $\vsigma$ (which can be either a correlated or product policy), a best-response policy of a player $k$ is defined to be $\vpi^\dagger_k \in \Delta(\calA_k)^S$ such that $\vpi_k^\dagger \in \argmax_{\vpi'_k} V_{k}^{\vpi'_k\times \vsigma_{-k}}(s)$. Also, we will denote $V^{\dagger,\vsigma_{-k}}_k(s) = \max_{\vpi'_k} V^{\vpi_k',\vsigma_{-k}}_k(s).$ It is rather straightforward to see that the problem of computing a best-response to a given policy is equivalent to solving a single-agent MDP problem.

\paragraph{Notions of equilibria.}Now that best-response policies have been defined, it is straightforward to define the different notions of equilibria. First, we define the notion of a coarse-correlated equilibrium.

\begin{definition}[CCE---infinite-horizon] A joint (potentially correlated) policy $\vsigma \in \Delta(\calA)^S$ is an $\epsilon$-approximate coarse-correlated equilibrium if it holds that for an $\epsilon$,
\begin{align}
    V^{\dagger, \vsigma_{-k}}_{k}(\vrho) - V_{k}^{\vsigma}(\vrho) \leq \epsilon, ~\forall k \in [n].
\end{align}
\end{definition}

Second, we define the notion of a Nash equilibrium. The main difference of the definition of the coarse-correlated equilibrium, is the fact that a NE Markovian stationary policy is a \textit{product policy}.

\begin{definition}[NE---infinite-horizon] A joint (potentially correlated) policy  $\vpi \in \prod_{k\in[n]} \Delta(\calA_k)^{S}$ is an $\epsilon$-approximate coarse-correlated equilibrium if it holds that for an $\epsilon$,
\begin{align}
    V^{\dagger, \vpi{-k}}_{k}(\vrho) - V_{k}^{\vpi}(\vrho) \leq \epsilon, ~\forall k \in [n].
\end{align}
\end{definition}

As it is folklore by now, infinite-horizon discounted Markov games have a stationary Markovian Nash equilibrium.

\subsection{Main results for infinite-horizon games}
The workhorse of our arguments in the following results is still the following nonlinear program with variables $\vpi, \vw$,

\begin{subequations}
        \makeatletter
        \def\@currentlabel{$\neprogram'$}
        \makeatother
        \label{nash-program-2b}
        \renewcommand{\theequation}{$\neprogram$.\arabic{equation}}
    \begin{tagblock}[tagname={$\neprogram'$},content={\label{prog:primenlp}}]
            \begin{alignat}{3}
            &\mathrlap{\displaystyle \mathrm{min}~\sum\limits_{k\in[n]} \vrho^\top \left( \vw_k - (\mathbf{I} - \gamma \pr(\vpi) )^{-1}\vr_{k}(\vpi) \right) } \\
            &\mathrm{s.t.}~&& \displaystyle w_{k}(s) \geq r_{k}(s,a,\vpi_{-k}) + \gamma \pr(s,a,\vpi_{-k}) \vw_{k}, \\
            & &&   \displaystyle \quad \forall s\in\calS, \forall k \in [n],\forall a \in \calA_k; \\
            & &&   \displaystyle \vpi_{k}(s) \in \Delta(\calA_k),  \\
            & &&  \displaystyle \quad \forall s\in\calS, \forall k \in [n],\forall a \in \calA_k.
        \end{alignat}
    \end{tagblock}
\end{subequations}

As we will prove, approximate NE's correspond to approximate global minima of \eqref{nash-program-2b} and vice-versa. Before that, we need some intermediate lemmas. The first lemma we prove is about the best-response program.

\paragraph{The best-response program.}
    Even for the infinite-horizon, we can define a linear program for the best-responses of all players. That program is the following, with variables $\vw$,
    
\begin{subequations}
        \makeatletter
        \def\@currentlabel{$\brprogram'$}
        \makeatother
        \label{br-program-2}
        \renewcommand{\theequation}{$\brprogram'$.\arabic{equation}}
    \begin{tagblock}[tagname={$\brprogram'$},content={\label{prog:br-program-2}}]
            \begin{alignat}{3}
                &\mathrlap{\displaystyle \mathrm{min}~\sum\limits_{k\in[n]} \vrho^\top \left( \vw_k - (\mathbf{I} - \gamma \pr(\hat\vsigma) )^{-1}\vr_{k}(\hat\vsigma) \right) }  \\
                &\mathrm{s.t.}~&&\textstyle w_{k}(s) \geq r_{k}(s,a,\hat\vsigma_{-k}) + \pr(s,a,\hat\vsigma_{-k}) \vw_{k}, \\
                & &&  \textstyle \quad \forall s\in\calS, \forall k \in [n],\forall a \in \calA_k.
            \end{alignat}
\end{tagblock}
\end{subequations}

\begin{lemma}[Best-response LP---infinite-horizon]
        Let a (possibly correlated) joint policy $\hat\vsigma$. Consider the linear program \eqref{br-program-2}.
        The optimal solution $\vw^\dagger$ of the program is unique  and corresponds to the value function of each player $k\in[n]$ when player $k$ best-responds to $\hat{\vsigma}$.
        \label{lemma:br-prog-2}
\end{lemma}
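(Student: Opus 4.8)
The plan is to mirror the finite-horizon argument of \Cref{lemma:br-prog} and reduce \eqref{br-program-2} to $n$ decoupled single-agent Markov decision processes. First I would observe that the program is \emph{separable}: the objective $\sum_{k\in[n]}\vrho^\top\bigl(\vw_k - (\mathbf{I}-\gamma\pr(\hat\vsigma))^{-1}\vr_k(\hat\vsigma)\bigr)$ is a sum of terms each depending on a single block $\vw_k$, every constraint involves only one such block, and the vector $(\mathbf{I}-\gamma\pr(\hat\vsigma))^{-1}\vr_k(\hat\vsigma)$ is a constant that does not affect the minimizer. Hence solving \eqref{br-program-2} is equivalent to solving, for each $k$ independently, the value-function linear program
\begin{align}
\min~\vrho^\top \vw_k \quad\text{s.t.}\quad w_k(s) \geq r_k(s,a,\hat\vsigma_{-k}) + \gamma\,\pr(s,a,\hat\vsigma_{-k})\vw_k,\ \forall s\in\calS,\ \forall a\in\calA_k.
\end{align}

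Next I would identify each such program with the canonical LP formulation of a discounted single-agent MDP. As in the reduction behind \Cref{claim:2pl-marginal}, best-responding to the (possibly correlated) joint policy $\hat\vsigma_{-k}$ is equivalent to solving the MDP faced by player $k$ whose reward and transition kernel are the per-action averages $r_k(s,a,\hat\vsigma_{-k})$ and $\pr(s,a,\hat\vsigma_{-k})$ obtained by marginalizing over the opponents' play; that $\hat\vsigma_{-k}$ may be correlated is irrelevant, since only these averages enter the dynamics of the deviating player. Writing $\calT_k$ for the associated Bellman optimality operator, $(\calT_k \vw_k)(s) = \max_{a\in\calA_k}\bigl\{r_k(s,a,\hat\vsigma_{-k}) + \gamma\,\pr(s,a,\hat\vsigma_{-k})\vw_k\bigr\}$, the feasible set of the program is exactly $\{\vw_k : \vw_k \geq \calT_k \vw_k\}$.

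The crux is then to show that this feasible set has a componentwise-least element equal to the best-response value $V^{\dagger,\hat\vsigma_{-k}}_k$. I would use that $\calT_k$ is monotone and a $\gamma$-contraction in the sup-norm: from $\vw_k \geq \calT_k\vw_k$, monotonicity gives $\vw_k \geq \calT_k\vw_k \geq \calT_k^2\vw_k \geq \cdots$, and the contraction property forces $\calT_k^m\vw_k \to V^{\dagger,\hat\vsigma_{-k}}_k$, so every feasible $\vw_k$ satisfies $\vw_k \geq V^{\dagger,\hat\vsigma_{-k}}_k$ entrywise. Conversely $V^{\dagger,\hat\vsigma_{-k}}_k = \calT_k V^{\dagger,\hat\vsigma_{-k}}_k$ is itself feasible. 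Since $\vrho$ places positive weight on every state, minimizing $\vrho^\top \vw_k$ over a set whose least element is $V^{\dagger,\hat\vsigma_{-k}}_k$ selects that element uniquely, which yields both the uniqueness of $\vw^\dagger$ and its identification with the best-response value function of player $k$.

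I expect the main obstacle to be the careful justification of this contraction-plus-monotonicity step that pins the feasible region's least element to the Bellman fixed point, together with the uniqueness claim: strictly speaking uniqueness requires $\vrho$ to have full support, since at states with zero initial mass the LP leaves $w_k(s)$ unconstrained above $V^{\dagger,\hat\vsigma_{-k}}_k(s)$. I would therefore either invoke a full-support assumption on $\vrho$ or restate uniqueness as ``unique on the support of $\vrho$, where it coincides with the best-response value.'' Everything else is a routine transcription of the finite-horizon proof of \Cref{lemma:br-prog} to the discounted, stationary setting.
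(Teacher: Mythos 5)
Your proof takes essentially the same route as the paper: decompose \eqref{br-program-2} into $n$ independent single-agent MDP linear programs, one per block $\vw_k$, and identify each optimum with the best-response value function --- the paper simply asserts the LP--Bellman correspondence that you spell out via the monotonicity-plus-contraction argument. Your caveat that uniqueness requires $\vrho$ to have full support (otherwise $w_k(s)$ is unconstrained above $V^{\dagger,\hat\vsigma_{-k}}_k(s)$ at states of zero initial mass) is a genuine subtlety the paper's one-line justification glosses over, and your proposed fix --- assume full support or restate uniqueness on the support of $\vrho$ --- is correct.
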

    
\begin{proof}
    We observe that the program is separable to $n$ independent linear programs, each with variables $\vw_k \in \R^{n}$,
    \begin{alignat}{2}
        \mathrlap{\textstyle \min~  \vrho^\top \vw_{k} } \\
        &\mathrm{s.t.}~&&\textstyle w_{k}(s) \geq r_{k}(s,a,\hat{\vsigma}_{-k}) + \gamma\pr(s,a,\hat{\vsigma}_{-k}) \vw_{k}, \\
        & &&  \textstyle \quad \forall s\in\calS, \forall a \in \calA_k.
    \end{alignat}
    Each of these linear programs describes the problem of a single agent MDP ---that agent being $k$. It follows that the optimal $\vw^\dagger_k$ for every program is unique (each program corresponds to a set of Bellman optimality equations).
    \end{proof}

    \paragraph{Properties of the NE program.}
    Second, we need to prove that the minimum value of the objective function of the program is nonnegative.
    \begin{lemma}[Feasibility of \eqref{nash-program-2b} and global optimum]
        The nonlinear program \eqref{nash-program-2b} is feasible, has a nonnegative objective value, and its global minimum is equal to $0$.
    \end{lemma}
    \begin{proof}
    
    For the feasibility of the nonlinear program, we invoke the theorem of the existence of a Nash equilibrium. \textit{i.e.}, let a NE product policy, $\vpi^\star$, and a vector $\vw^\star\in \R^{n\times H \times S}$ such that $w_{k,s}^\star(s) = V^{\dagger, \vpi^\star_{-k}}_{k}(s),~\forall k\in [n]\times \calS$.

    By \Cref{lemma:br-prog-2}, we know that $(\vpi^\star,\vw^\star)$ satisfies all the constraints of \eqref{nash-program-2b}. Additionally, because $\vpi^\star$ is a NE, $V^{\vpi^\star}_{k}(\vrho) = V^{\dagger, \vpi_{-k}^\star}_{k}(\vrho)$ for all $k\in[n]$. Observing that,
    \begin{align}
    & \vrho^\top \left( \vw_k^\star - (\mathbf{I} - \gamma \pr(\vpi^\star) )^{-1}\vr_{k}(\vpi^\star) \right) =  V^{\dagger, \vpi_{-k}^\star}_{k}(\vrho)  - V^{\vpi^\star}_{k}(\vrho) = 0,\end{align}
    concludes the argument that a NE attains an objective value equal to $0$.
    
    Continuing, we observe that due to \eqref{zero-sum} the objective function can be equivalently rewritten as,
        \begin{align}
            \sum\limits_{k\in[n]}& \left(  \vrho^\top \vw_k - \vrho^\top  (\mathbf{I} - \gamma \pr(\vpi) )^{-1}\vr_{k}(\vpi) \right)   \\
            &=\sum\limits_{k\in[n]} \vrho^\top \vw_{k} - \vrho^\top  (\mathbf{I} - \gamma \pr(\vpi) )^{-1} \sum\limits_{k\in[n]}  \vr_{k}(\vpi_{h})  \\
            &=\sum\limits_{k\in[n]}  \vrho^\top \vw_k .
        \end{align}
    Next, we focus on the inequality constraint
    \begin{equation}
        w_{k}(s) \geq r_{k}(s,a,\vpi_{-k}) + \gamma\pr(s, a,\vpi_{-k}) \vw_{k}
    \end{equation}
    which holds for all $s\in\calS$, all players $k\in[n]$, and all $a\in \calA_k$.

    By summing over $a \in \calA_k$ while multiplying each term with a corresponding coefficient $\pi_{k}(a|s)$, the display written in an equivalent element-wise vector inequality reads:
    \begin{equation}
        \vw_{k} \geq \vr_{k,h}(\vpi) + \gamma\pr(\vpi) \vw_{k}.
    \end{equation}
    Finally, after consecutively substituting $\vw_{k}$ with the element-wise lesser term $\vr_{k}(\vpi) + \gamma\pr_(\vpi) \vw_{k}$, we end up with the inequality:
    \begin{align}
        \vw_{k} \geq  \left( \mathbf{I} - \gamma \pr(\vpi) \right)^{-1} \vr_{k}(\vpi). \label{eq:induction-2}
    \end{align}
    We note that $\mathbf{I} + \gamma\pr(\vpi) + \gamma^2\pr^2(\vpi) + \dots = \left( \mathbf{I} - \gamma \pr(\vpi) \right)^{-1}  $.
    
    Summing over $k$, it holds for the $s_1$-th entry of the inequality,
    \begin{align}
        \sum_{k\in[n]} \vw_k \geq \sum_{k\in [n]} \left( \mathbf{I} - \gamma \pr(\vpi) \right)^{-1}\vr_{k}(\vpi) =  \left( \mathbf{I} - \gamma \pr(\vpi) \right)^{-1} \sum_{k\in [n]} \vr_{k}(\vpi) =0.
    \end{align}
    Where the equality holds due to the zero-sum property, \eqref{zero-sum}.
    \end{proof}

\begin{theorem}[NE and global optima of \eqref{nash-program-2b}---infinite-horizon]
    If $(\vpi^\star,\vw^\star)$ yields an $\epsilon$-approximate global minimum of \eqref{nash-program-2b}, then $\vpi^\star$ is an ${n}\epsilon$-approximate NE of the infinite-horizon zero-sum polymatrix switching controller MG, $\Gamma$. Conversely, if $\vpi^\star$ is an $\epsilon$-approximate NE of the MG $\Gamma$ with corresponding value function vector $\vw^\star$ such that $ w^\star_{k}(s) = V_{k}^{\vpi^\star}(s) \forall {(k,s)\in [n]\times\calS}$, then $(\vpi^\star, \vw^\star)$ attains an $\epsilon$-approximate global minimum of \eqref{nash-program-2b}.
    \label{theorem:ne-globopt-2}
\end{theorem}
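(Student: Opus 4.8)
The plan is to transcribe the finite-horizon argument behind \Cref{theorem:ne-globopt} into the infinite-horizon setting, replacing the finite product-sum representation of the value function by the resolvent $(\mathbf{I}-\gamma\pr(\vpi))^{-1}$ and the backward induction over $h$ by a Neumann-series expansion. Essentially all of the infinite-horizon content is already packaged into two facts proved earlier in this subsection: \Cref{lemma:br-prog-2}, stating that every $\vw_k$ feasible for \eqref{br-program-2} satisfies $\vrho^\top\vw_k \geq V_{k}^{\dagger,\vsigma_{-k}}(\vrho)$ with equality at the unique optimum; and the feasibility lemma for \eqref{nash-program-2b}, which shows the objective is nonnegative with global minimum $0$ and that every feasible $(\vpi,\vw)$ obeys the entrywise bound $\vw_k \geq (\mathbf{I}-\gamma\pr(\vpi))^{-1}\vr_k(\vpi)$. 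Given these, the theorem reduces to bookkeeping, and I would prove the two implications separately.

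For the forward implication, suppose $(\vpi^\star,\vw^\star)$ is feasible and attains an $\epsilon$-approximate global minimum. Since the global minimum equals $0$, the objective value there is at most $\epsilon$; by the feasibility lemma each summand $\vrho^\top(\vw_k^\star-(\mathbf{I}-\gamma\pr(\vpi^\star))^{-1}\vr_k(\vpi^\star))$ is nonnegative, hence each is individually at most $\epsilon$. I would then identify $\vrho^\top(\mathbf{I}-\gamma\pr(\vpi^\star))^{-1}\vr_k(\vpi^\star)=V_k^{\vpi^\star}(\vrho)$ and, using that $\vw^\star$ is feasible for \eqref{br-program-2} together with \Cref{lemma:br-prog-2}, bound $\vrho^\top\vw_k^\star\geq V_{k}^{\dagger,\vpi_{-k}^\star}(\vrho)$. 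Chaining these gives $V_{k}^{\dagger,\vpi_{-k}^\star}(\vrho)-V_k^{\vpi^\star}(\vrho)\leq\epsilon$ for every $k$, which is exactly the $\epsilon$-approximate (and a fortiori $n\epsilon$-approximate) NE condition.

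For the converse, given an $\epsilon$-approximate NE $\vpi^\star$ I would take $\vw_k^\star$ to be the best-response value vector, i.e.\ the unique optimizer of \eqref{br-program-2} against $\vpi^\star$ guaranteed by \Cref{lemma:br-prog-2}; this choice is feasible for \eqref{nash-program-2b} by construction. Using the zero-sum identity \eqref{zero-sum} to cancel the reward cross-terms exactly as in the feasibility lemma, the objective collapses to $\sum_{k}\bigl(V_{k}^{\dagger,\vpi_{-k}^\star}(\vrho)-V_k^{\vpi^\star}(\vrho)\bigr)$, i.e.\ the sum of the $n$ per-player exploitability gaps, each of which is at most $\epsilon$ by the NE condition; for an exact NE this vanishes. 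Since the minimum is $0$, $(\vpi^\star,\vw^\star)$ is therefore an approximate global minimizer.

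I expect the only genuinely delicate point to be the monotone iteration underlying the resolvent bound: iterating $\vw_k\geq\vr_k(\vpi)+\gamma\pr(\vpi)\vw_k$ to conclude $\vw_k\geq(\mathbf{I}-\gamma\pr(\vpi))^{-1}\vr_k(\vpi)$ relies on $\pr(\vpi)$ being stochastic and $\gamma<1$, so that $\sum_{t\geq0}\gamma^t\pr(\vpi)^t$ converges entrywise to the resolvent with nonnegative entries and the inequality direction is preserved at each step. This is precisely the step that distinguishes the infinite-horizon case from the finite-horizon one, but it is already discharged inside the feasibility lemma, so it can be invoked rather than reproved. Notably, switching control plays no role in this equivalence itself---only the zero-sum structure is used---so \Cref{theorem:ne-globopt-2} holds for general zero-sum polymatrix infinite-horizon games, the switching-control assumption entering only later to render the constraints of \eqref{nash-program-2b} linear in the opponents' marginals.
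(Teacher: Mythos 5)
Your proposal is correct and takes essentially the same route as the paper's proof: both directions reduce to \Cref{lemma:br-prog-2} together with the feasibility lemma's entrywise bound $\vw_k \geq (\mathbf{I}-\gamma\pr(\vpi))^{-1}\vr_k(\vpi)$, and in the converse you pick $\vw^\star$ to be the best-response value vectors, exactly as the paper does (rather than the values of $\vpi^\star$ named in the theorem statement, which need not be feasible for an approximate NE). Like the paper's own proof, you in fact obtain the constants the other way around from the statement --- an $\epsilon$-approximate minimum yields an $\epsilon$-approximate NE, while an $\epsilon$-approximate NE yields an $n\epsilon$-approximate minimum --- and your closing observation that only the zero-sum structure (not switching control) enters this equivalence is also consistent with how the paper deploys \Cref{assumption-single-contrl} only later, in the feasibility step of \Cref{thm:finish-2}.
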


\begin{proof}~\\
    \textbf{An approximate NE is an approximate global minimum.}
    We show that an $\epsilon$-approximate NE, $\vpi^\star$, achieves an $n\epsilon$-approximate global minimum of the program. Utilizing \Cref{lemma:br-prog-2} by setting $\vw_k^\star = \mathbf{V}^{\dagger,\vpi_{-k}^\star}(\vrho)$, feasibility , and the definition of an $\epsilon$-approximate NE we see that,

    \begin{align}
        \sum \limits_{k\in[n]}\left(\vrho^\top \vw_k^\star - \vrho^\top \left( \mathbf{I} - \gamma \pr(\vpi^\star) \right)^{-1}\vr_{k}(\vpi^\star) \right)   
        & = \sum_{k\in[n]} \left( \vrho^\top \vw_k^\star  - V^{\vpi^\star}_{k}(\vrho) \right) \\
        & \leq \sum_{k\in [n]} \epsilon  = n \epsilon.
    \end{align}
    Indeed, this means that $\vpi^\star, \vw^\star$ is an $n\epsilon$-approximate global minimizer of \eqref{nash-program-2b}.
    
    \noindent\textbf{An approximate global minimum is an approximate NE.}
    For this direction, we let a feasible $\epsilon$-approximate global minimizer of the program \eqref{nash-program-2b}, $(\vpi^\star, \vw^\star)$. Because a global minimum of the program is equal to $0$, an $\epsilon$-approximate global optimum must be at most $\epsilon>0$. We observe that for every $k \in [n]$, 
    \begin{align}
        \vrho^\top \vw_{k}^\star \geq \vrho^\top \left( \mathbf{I} - \gamma \pr(\vpi^\star) \right)^{-1}\vr_{k}(\vpi^\star) \label{eq:ineq2},
    \end{align}
    which follows from induction on the inequality constraint \eqref{eq:induction-2}.

    Consequently, the assumption that
     \begin{align}
        \epsilon{\geq} \vrho^\top \vw_{k}^\star -  \vrho^\top \left( \mathbf{I} - \gamma \pr(\vpi^\star) \right)^{-1}\vr_{k}(\vpi^\star) 
    \end{align}
    and \cref{eq:ineq2}, yields the fact that
    \begin{align}
        \epsilon &\geq \vrho^\top\vw_{k}^\star - \vrho^\top \left( \mathbf{I} - \gamma \pr(\vpi^\star) \right)^{-1}\vr_{k}(\vpi^\star)  \\
        &\geq V^{\dagger,\vpi^\star_{-k}}_{k}(\vrho) - V^{\vpi^\star}_{k}(\vrho),
    \end{align}
    where the second inequality holds from the fact that $\vw^\star$ is also feasible for \eqref{br-program-2}. The latter concludes the proof, as the display coincides with the definition of an $\epsilon$-approximate NE.
\end{proof}

\begin{theorem}[CCE collapse to NE in polymatrix MG---infinite-horizon]\label{thm:finish-2}
    Let a zero-sum polymatrix switching-control Markov game, \textit{i.e.}, a Markov game for which \Cref{assumption-poymatrix,assumption-single-contrl} hold. Further, let an $\epsilon$-approximate CCE of that game $\vsigma$. Then, the marginal product policy $\vpi^\sigma$, with $\vpi_{k}^{\vsigma}(a|s) = \sum_{ \va_{-k}\in \calA_{-k} } \vsigma(a, \va_{-k}),~\forall k \in [n]$ is an $n\epsilon$-approximate NE.
\end{theorem}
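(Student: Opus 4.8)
The plan is to mirror exactly the finite-horizon argument of \Cref{thm:finish}, substituting the infinite-horizon machinery: the nonlinear program \eqref{nash-program-2b}, the best-response lemma \Cref{lemma:br-prog-2}, and the equivalence \Cref{theorem:ne-globopt-2}. Concretely, starting from an $\epsilon$-approximate CCE $\vsigma$, I would let $\vw^\dagger_k$ be the unique best-response value vector of player $k$ against $\vsigma_{-k}$, guaranteed by \Cref{lemma:br-prog-2}, and then exhibit $(\vpi^{\vsigma}, \vw^\dagger)$ as a feasible point of \eqref{nash-program-2b} whose objective value is at most $n\epsilon$. An appeal to \Cref{theorem:ne-globopt-2} then converts this into the claim that $\vpi^{\vsigma}$ is an $n\epsilon$-approximate NE.

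The crux is the feasibility step, where both structural assumptions enter and where the correlation of $\vsigma$ is discarded in favour of its marginals. Since $\vw^\dagger_k$ are best-response values, they satisfy the Bellman inequality $w^\dagger_k(s) \geq r_k(s,a,\vsigma_{-k}) + \gamma\,\pr(s,a,\vsigma_{-k})\,\vw^\dagger_k$ for every state $s$ and action $a \in \calA_k$. By the polymatrix structure of \Cref{assumption-poymatrix}, the expected reward decomposes as $r_k(s,a,\vsigma_{-k}) = \sum_{j\in\adj(s,k)} r_{kj}(s,a,\vpi^{\vsigma}_j)$, because each pairwise term involves a single neighbour and therefore depends on $\vsigma$ only through its one-player marginal $\vpi^{\vsigma}_j$; this is precisely $r_k(s,a,\vpi^{\vsigma}_{-k})$. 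By the switching-control structure of \Cref{assumption-single-contrl}, the transition out of $s$ is governed by the lone controller $\argctrl(s)$, so $\pr(s,a,\vsigma_{-k}) = \pr(s,a,\vpi^{\vsigma}_{\argctrl(s)}) = \pr(s,a,\vpi^{\vsigma})$, again depending on $\vsigma$ only through a single marginal. Combining the two reductions turns the Bellman inequality into exactly the constraint of \eqref{nash-program-2b} evaluated at $(\vpi^{\vsigma}, \vw^\dagger)$, establishing feasibility.

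For the objective, I would use the zero-sum identity \eqref{zero-sum} exactly as in the feasibility lemma for \eqref{nash-program-2b}: summing over $k$ collapses the value-function term, so the objective equals $\sum_{k} \vrho^\top \vw^\dagger_k = \sum_k V^{\dagger,\vsigma_{-k}}_k(\vrho)$. Summing the $n$ CCE inequalities $V^{\dagger,\vsigma_{-k}}_k(\vrho) - V^{\vsigma}_k(\vrho) \leq \epsilon$ and using $\sum_k V^{\vsigma}_k(\vrho)=0$ (again zero-sum) bounds this by $n\epsilon$. Thus $(\vpi^{\vsigma}, \vw^\dagger)$ is an $n\epsilon$-approximate global minimizer, and \Cref{theorem:ne-globopt-2} yields the desired $n\epsilon$-approximate NE.

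The only genuinely load-bearing step is the feasibility reduction, and within it the transition term is the delicate one: it is the switching-control assumption that guarantees $\pr(s,a,\vsigma_{-k})$ sees only the controller's marginal, allowing the joint correlation to be dropped. This is exactly where the argument would break if two or more players influenced the transition --- the failure exhibited by \Cref{example-no-collapse}. The reward reduction is comparatively routine (linearity of expectation over pairwise games), and the resolvent $(\mathbf{I}-\gamma\pr(\vpi))^{-1}$ needs no special handling here, since its manipulation is already absorbed into \Cref{lemma:br-prog-2,theorem:ne-globopt-2}.
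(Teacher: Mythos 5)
Your proposal is correct and follows the paper's own proof of this theorem essentially step for step: feasibility of $(\vpi^{\vsigma},\vw^{\dagger})$ in \eqref{nash-program-2b} via the polymatrix reduction of the reward term and the switching-control reduction of the transition term, an objective bound from the zero-sum identity plus the CCE inequalities, and a final appeal to \Cref{theorem:ne-globopt-2}. If anything, your bookkeeping is slightly more careful than the paper's, which asserts $\sum_k \vrho^\top \vw_k^\dagger \leq \epsilon$ where summing the $n$ CCE inequalities actually gives $n\epsilon$; the $n\epsilon$-NE conclusion survives either way, since the proof of \Cref{theorem:ne-globopt-2} bounds each individual player's regret by the full objective value.
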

\begin{proof}
    Let an $\epsilon$-approximate CCE policy, $\vsigma$, of game $\Gamma$. Moreover, let the best-response value-vectors of each agent $k$ to joint policy $\vsigma_{-k}$, $\vw_{k}^\dagger$.

    Now, we observe that due to \Cref{assumption-poymatrix},
    \begin{align}
        w^\dagger_{k}(s) &\geq r_{k}(s,a,\vsigma_{-k}) + \pr_h(s,a,\vsigma_{-k}) \vw^\dagger_{k}\\
        =& \sum_{j\in \adj(k)} r_{(k,j),h}(s,a,\vpi_j^{\vsigma}) + \pr(s,a,\vsigma_{-k}) \vw^\dagger_{k}.
    \end{align}

    Further, due to \Cref{assumption-single-contrl},
    \begin{align}
        \pr(s,a,\vsigma_{-k}) \vw^\dagger_{k} 
         &=  \pr(s,a,\vpi^{\vsigma}_{\argctrl(s)})\vw^\dagger_{k},
    \end{align}
    or, 
    \begin{align}
        \pr(s,a,\vsigma_{-k}) \vw^\dagger_{k} 
         &=  \pr(s,a,\vpi^{\vsigma} )\vw^\dagger_{k}.
    \end{align}
    Putting these pieces together, we reach the conclusion that $(\vpi^\sigma, \vw^\dagger)$ is feasible for the nonlinear program \eqref{nash-program-2}.

    What is left is to prove that it is also an $\epsilon$-approximate global minimum. Indeed, if $\sum_{k}\vrho^\top\vw^\dagger_{k} {\leq} \epsilon$ (by assumption of an $\epsilon$-approximate CCE), then the objective function of \eqref{nash-program-2b} will attain an $\epsilon$-approximate global minimum. In turn, due to \cref{theorem:ne-globopt-2} the latter implies that $\vpi^{\vsigma}$ is an {$n\epsilon$}-approximate NE.
\end{proof}

\subsection{No equilibrium collapse with more than one controllers per-state}\label{sec:fails-2}


\begin{example}
\label{example-no-collapse-2}
We consider the following $3$-player Markov game that takes place for a time horizon $H=3$. There exist three states, $s_1,s_2,$ and $s_3$ and the game starts at state $s_1$. Player $3$ has a single action in every state, while players $1$ and $2$ have two available actions $\{a_1,a_2\}$ and $\{b_1,b_2\}$ respectively in every state. The initial state distribution $\vrho$ is the uniform probability distribution over $\calS$.
\paragraph{Reward functions.}
If player $1$ (respectively, player $2$) takes action $a_1$  (resp., $b_1$), in either of the states $s_1$ or $s_2$, they get a reward equal to $\frac{1}{20}$. In state $s_3$, both players get a reward equal to $-\frac{1}{2}$ regardless of the action they select. Player $3$ always gets a reward that is equal to the negative sum of the reward of the other two players. This way, the \emph{zero-sum polymatrix property} of the game is ensured (\Cref{assumption-poymatrix}).
\paragraph{Transition probabilities.}
If players $1$ and $2$ select the joint action $(a_1,b_1)$ in state $s_1$, the game will transition to state $s_2$. In any other case, it will transition to state $s_3$. The converse happens if in state $s_2$ they take joint action $(a_1,b_1)$; the game will transition to state $s_3$. For any other joint action, it will transition to state $s_1$.  From state $s_3$, the game transition to state $s_1$ or $s_2$ uniformally at random.

At this point, it is important to notice that two players control the transition probability from one state to another. In other words, \Cref{assumption-single-contrl} does not hold.

\begin{figure}[h]
    \centering
\begin{tikzpicture}
  \node[] (o) {};
  \node[circle,draw,left=2cm of o] (a) {$s_1$};
  \node [circle,draw, right=2cm of o] (b) {$s_2$};
  \node [circle,draw, below=3cm of o] (c) {$s_3$};

 
   


  \draw[->] (c) to[right] node[right]{$1/2$} (a);
  \draw[->] (c) to[right] node[left]{$1/2$} (b);
  
  \draw[->] (a) to[bend left] node[above]{$1-\pi_1(a_1|s_1) \pi_2(b_1|s_1)$} (b);
  \draw[->] (a) to[bend right] node[below,fill=white]{$\pi_1(a_1 | s_1)\pi_2(b_1|s_1)$} (c);
  \draw[->] (b) to[bend left] node[below,fill=white]{$\pi_1(a_1|s_2)\pi_2(b_1|s_2)$} (c);

  \draw[->] (b) to[bend left] node[above=0.5cm,fill=white]{$1-\pi_1(a_1|s_2)\pi_2(b_1|s_2)$} (a);
  
\end{tikzpicture}
\caption{A graph of the state space with transition probabilities parametrized with respect to the policy of each player.}
\end{figure}
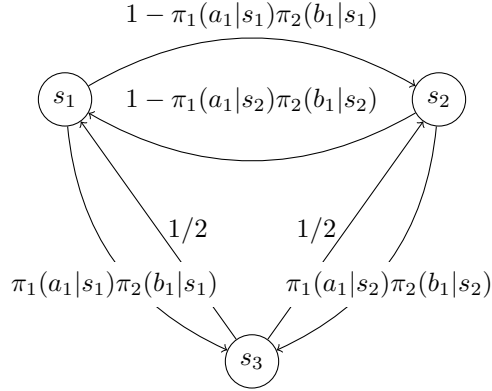
Next, we consider the joint policy $\vsigma$,
\begin{align}
\vsigma(s_1) = \vsigma(s_2) = 
\begin{blockarray}{ccc}
& \scriptstyle b_1 & \scriptstyle b_2\\
\begin{block}{c(cc)}
    \scriptstyle a_1 & 0 & 1/2 \\
    \scriptstyle a_2 & 1/2 & 0 \\
\end{block}
\end{blockarray}.
\end{align}
\begin{claim}
    The joint policy $\vsigma$ that assigns probability $\frac{1}{2}$ to the joint actions $(a_1,b_2)$ and $(a_2,b_1)$ in both states $s_1,s_2$ is a CCE and  $V_{1}^{\vsigma} (\vrho) = V_{2}^{\vsigma} (\vrho) = -\frac{1}{10} $.
    \label{claim-cce-2}
\end{claim}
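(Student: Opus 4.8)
The plan is to establish the two assertions of the claim separately: first that the stated value functions hold, and second that no unilateral deviation of player $1$ or player $2$ is profitable, so that $\vsigma$ is a CCE. Throughout I will exploit the symmetry between the two nontrivial players and the fact that player $3$ is inert.

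First I would compute the objects that $\vsigma$ induces on the state chain. Since $\sigma(a_1,b_1|s)=0$ for $s\in\{s_1,s_2\}$, under $\vsigma$ the game moves from $s_1$ to $s_3$ with probability $1$, from $s_2$ to $s_1$ with probability $1$, and from $s_3$ to $s_1$ or $s_2$ with probability $\tfrac{1}{2}$ each; this gives an explicit $3\times 3$ stochastic matrix $\pr(\vsigma)$. The per-state expected reward vectors $\vr_1(\vsigma),\vr_2(\vsigma)$ follow from the reward description: in $s_1$ and $s_2$ each player collects $\tfrac{1}{20}$ with the probability its marginal assigns to $a_1$ (resp. $b_1$), which is $\tfrac{1}{2}$, and collects $-\tfrac{1}{2}$ in $s_3$; the arbitrary choice of $\vsigma(s_3)$ is immaterial since neither rewards nor transitions depend on the action there. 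I would then read off the value from the resolvent form $V_k^{\vsigma}(\vrho)=\vrho^\top(\mathbf{I}-\gamma\pr(\vsigma))^{-1}\vr_k(\vsigma)$; equivalently, solving the $3\times 3$ discounted Bellman system $\vw=\vr_k(\vsigma)+\gamma\pr(\vsigma)\vw$ and averaging against the uniform $\vrho$ yields $V_1^{\vsigma}(\vrho)=V_2^{\vsigma}(\vrho)=-\tfrac{1}{10}$. By symmetry this computation need only be carried out once.

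Second, for the CCE property I would reuse the reduction of a unilateral deviation to a single-agent MDP, exactly as in the proof of \Cref{claim-cce}. When player $1$ deviates to $\vpi_1$ while the others follow $\vsigma_{-1}$, player $1$'s reward depends only on its own action and the current state, whereas the transition depends on the joint action of players $1$ and $2$, with player $2$'s action drawn from its marginal $(\tfrac{1}{2},\tfrac{1}{2})$ under $\vsigma$. This defines a fixed MDP in which, from $s_1$, the chain moves to $s_2$ with probability $\tfrac{1}{2}\pi_1(a_1|s_1)$ and to $s_3$ otherwise, and analogously from $s_2$. Since every MDP admits a deterministic optimal policy, it suffices to evaluate the four deterministic choices of actions in $s_1$ and $s_2$, solve the corresponding $3\times 3$ discounted system for each, and verify that the value from $\vrho$ never exceeds $-\tfrac{1}{10}$. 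The identical argument applies to player $2$, and player $3$ cannot affect its reward, so $\vsigma$ is a CCE.

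The main obstacle, relative to the finite-horizon warm-up, is that the value is now the solution of a linear system that couples all three states through the discount factor $\gamma$, rather than a short backward-induction polynomial; one must track the resolvent for each of the four deviation MDPs and certify the bound uniformly. Conceptually, the feature that drives the whole example is that \emph{two} players jointly govern the transition, so the term $\pr(s,a,\vsigma_{-1})$ does not collapse to a function of a single controller's marginal. This is precisely what invalidates the feasibility step used in the switching-control case (\Cref{thm:finish-2}) and, as the subsequent analysis of the marginal product policy shows, lets $\vpi_1^{\vsigma}\times\vpi_2^{\vsigma}$ fail to be a best response despite $\vsigma$ being a CCE.
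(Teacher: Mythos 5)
Your proposal is correct and takes essentially the same route as the paper's proof: compute $V_k^{\vsigma}(\vrho)=\vrho^\top(\mathbf{I}-\gamma\pr(\vsigma))^{-1}\vr_k(\vsigma)$ directly from the induced chain and expected rewards $\left(\tfrac{1}{40},\tfrac{1}{40},-\tfrac{1}{2}\right)$, then reduce each unilateral deviation to a single-agent discounted MDP, check the four deterministic policies of player $1$, and dispose of player $2$ by symmetry and player $3$ by inertness. The only detail to pin down when executing your plan is the discount factor, which the paper leaves implicit: its displayed resolvent and deviation values $-\tfrac{2}{5},-\tfrac{1}{6},-\tfrac{5}{16},-\tfrac{5}{16}$ correspond to $\gamma=\tfrac{2}{3}$, under which all four deviation values indeed lie below $-\tfrac{1}{10}$, as your argument requires.
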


\begin{proof}
    \begin{align}
    V_{1}^{\vsigma}(\vrho)
    &= \vrho^\top\left( \mathbf{I}  - \gamma\pr(\vsigma)\right)^{-1} \vr_1(\vsigma) 
    \\
    &=
 \left(\begin{matrix}\frac{1}{3} & \frac{1}{3} & \frac{1}{3}\end{matrix}\right)
     \left(\begin{matrix}\frac{9}{5} & \frac{6}{5} & 0\\\frac{6}{5} & \frac{9}{5} & 0\\1 & 1 & 1\end{matrix}\right)
 \left(\begin{matrix}\frac{1}{40}\\\frac{1}{40}\\- \frac{1}{2}\end{matrix}\right)\\
    &=-\frac{1}{10}.
    \end{align}

We check every deviation,
\begin{itemize}[leftmargin=3pt,rightmargin=3pt]
    \item $\vpi_1(s_1) = \vpi_1(s_2) =  \begin{pmatrix}1 & 0 \end{pmatrix}, V^{\vpi_1\times\vsigma_{-1}}(\vrho) = - \frac{2}{5}$,
    \item $\vpi_1(s_1) = \vpi_1(s_2) =  \begin{pmatrix}0 & 1 \end{pmatrix}, V^{\vpi_1\times\vsigma_{-1}}(\vrho) = - \frac{1}{6}$,
    \item $\vpi_1(s_1) =  \begin{pmatrix}1 & 0 \end{pmatrix}, ~ \vpi_1(s_2) =  \begin{pmatrix}0 & 1 \end{pmatrix}, V^{\vpi_1\times\vsigma_{-1}}(\vrho) = - \frac{5}{16}$,
    \item $\vpi_1(s_1) =  \begin{pmatrix}0 & 1 \end{pmatrix}, ~ \vpi_1(s_2) =  \begin{pmatrix}1 & 0 \end{pmatrix}, V^{\vpi_1\times\vsigma_{-1}}(\vrho) = - \frac{5}{16}$.
\end{itemize}

For every such deviation the value of player $1$ is smaller than $-\frac{1}{10}.$ For player $2$, the same follows by symmetry. Hence, $\vsigma$ is indeed a CCE.

\end{proof}

Yet, the marginalized product policy of $\vsigma$ which we note as $\vpi^{\vsigma}_1 \times \vpi^{\vsigma}_2$ does not constitute a NE. The components of this policy are,
\begin{align}
\begin{cases}
    \displaystyle \vpi_1^{\vsigma}(s_1) = \displaystyle \vpi_1^{\vsigma}(s_2) = 
        \begin{blockarray}{cc}
         \scriptstyle a_1 & \scriptstyle a_2\\
        \begin{block}{(cc)}
           \displaystyle 1/2 & \displaystyle 1/2 \\
        \end{block}
        \end{blockarray},\\ \displaystyle
\vpi_2^{\vsigma}(s_1) = \displaystyle \vpi_2^{\vsigma}(s_2) = 
        \begin{blockarray}{cc}
        \scriptstyle b_1 & \scriptstyle  b_2\\
        \begin{block}{(cc)}
            \displaystyle 1/2 & \displaystyle 1/2 \\
        \end{block}
        \end{blockarray}. 
\end{cases}
\end{align}

\textit{I.e.}, the product policy $\vpi^{\vsigma}_1 \times \vpi^{\vsigma}_2$ selects any of the two actions of each player in states $s_1, s_2$ independently and uniformally at random. With the following claim, it can be concluded that in general when more than one player control the transition the set of equilibria do not collapse.

\begin{claim}
    The product policy $\vpi^{\vsigma}_1 \times \vpi^{\vsigma}_2$ is not a NE.
    \label{claim-not-ne-2}
\end{claim}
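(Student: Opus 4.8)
The plan is to repeat, in closed form, the computation behind \Cref{claim-cce-2}, now evaluated at the marginal product policy rather than at $\vsigma$, and then to exhibit a single strictly profitable unilateral deviation for player~$1$. Recall that for any stationary product policy the infinite-horizon values admit the resolvent expression $V_k^{\vpi_1\times\vpi_2}(\vrho)=\vrho^\top\left(\mathbf{I}-\gamma\pr(\vpi_1\times\vpi_2)\right)^{-1}\vr_k(\vpi_1\times\vpi_2)$. First I would assemble the two ingredients as functions of the four scalars $\pi_1(a_1\mid s_1),\pi_1(a_1\mid s_2),\pi_2(b_1\mid s_1),\pi_2(b_1\mid s_2)$: the reward vector $\vr_1$ has entries $\tfrac1{20}\pi_1(a_1\mid s_1)$, $\tfrac1{20}\pi_1(a_1\mid s_2)$ and $-\tfrac12$, while the $3\times3$ row-stochastic matrix $\pr(\vpi_1\times\vpi_2)$ is read off from the transition rule of \Cref{example-no-collapse-2}, whose only nonlinearity is the product $\pi_1(a_1\mid s)\pi_2(b_1\mid s)$ governing the edge out of $s_1$ and $s_2$ (this is exactly the two-controller coupling that violates \Cref{assumption-single-contrl}). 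Substituting the uniform value $\tfrac12$ for all four scalars yields the baseline $V_1^{\vpi_1^{\vsigma}\times\vpi_2^{\vsigma}}(\vrho)$, and by the $1\leftrightarrow2$ symmetry of the construction the same number equals $V_2^{\vpi_1^{\vsigma}\times\vpi_2^{\vsigma}}(\vrho)$.

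The second step produces a deviation beating this baseline. Since a unilateral deviation reduces player~$1$ to a single-agent MDP against the fixed opponent policy $\vpi_2^{\vsigma}$, an optimal response can be taken deterministic, so it suffices to test the four deterministic policies obtained by choosing, independently in $s_1$ and $s_2$, one of player~$1$'s two pure actions. For each candidate $\vpi_1'$ I would recompute $V_1^{\vpi_1'\times\vpi_2^{\vsigma}}(\vrho)$ from the same inverse-matrix formula, with the corresponding deterministic $\pr$ and $\vr_1$, and single out one with $V_1^{\vpi_1'\times\vpi_2^{\vsigma}}(\vrho)>V_1^{\vpi_1^{\vsigma}\times\vpi_2^{\vsigma}}(\vrho)$; concretely, for the discount factor used in \Cref{claim-cce-2} the deviation that always plays $a_1$ in both $s_1$ and $s_2$ is profitable. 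This strict gap forces $V^{\dagger,\vpi^{\vsigma}_{-1}}_{1}(\vrho)-V_1^{\vpi_1^{\vsigma}\times\vpi_2^{\vsigma}}(\vrho)>0$, which contradicts the Nash condition and shows that $\vpi_1^{\vsigma}\times\vpi_2^{\vsigma}$ is not a NE.

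The only real difficulty is bookkeeping rather than conceptual. In contrast with the finite-horizon \Cref{claim-not-ne}, where the value unrolls by backward induction over the three stages, here each value requires inverting the policy-dependent resolvent $\mathbf{I}-\gamma\pr$, so one must transcribe the nine entries of $\pr(\vpi_1\times\vpi_2)$ correctly and fix $\gamma$ consistently with \Cref{claim-cce-2}. A useful economy is that the argument needs only two numerical evaluations of the formula, at the product policy and at the one profitable deviation, so I would substitute these two concrete policies directly into $\vrho^\top\left(\mathbf{I}-\gamma\pr\right)^{-1}\vr_1$ rather than carrying the full symbolic value function, and then compare the two resulting rationals.
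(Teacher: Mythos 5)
Your overall route is exactly the paper's: evaluate $V_1$ at the product policy via the resolvent formula $\vrho^\top(\mathbf{I}-\gamma\pr)^{-1}\vr_1$ with $\gamma=2/3$ (the value implicit in the matrices of \Cref{claim-cce-2}), reduce the deviation problem to a single-agent MDP so that deterministic deviations suffice, and exhibit one profitable deviation. The baseline you would obtain, $V_1^{\vpi_1^{\vsigma}\times\vpi_2^{\vsigma}}(\vrho)=-\tfrac{3}{10}$, matches the paper. However, your concrete witness is wrong: the deviation that \emph{always plays $a_1$} is not profitable. Under it, the joint action $(a_1,b_1)$ occurs with probability $\tfrac12$ in each of $s_1,s_2$, so the chain repeatedly falls into the punishing state $s_3$; since the resulting transition matrix is doubly stochastic, $\vrho^\top(\mathbf{I}-\gamma\pr)^{-1}=\tfrac{1}{1-\gamma}\vrho^\top$ and
\begin{align}
V_1^{(a_1\text{-always})\times\vpi_2^{\vsigma}}(\vrho)=\frac{1}{1-\gamma}\cdot\frac{1}{3}\left(\frac{1}{20}+\frac{1}{20}-\frac{1}{2}\right)=-\frac{2}{5}<-\frac{3}{10},
\end{align}
i.e., strictly \emph{worse} than the baseline. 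The intuition is reversed: playing $a_1$ earns the small reward $\tfrac{1}{20}$ but maximizes the probability $\pi_1(a_1|s)\pi_2(b_1|s)$ of transitioning into $s_3$, where the reward is $-\tfrac12$.

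The correct witness — and the one the paper uses — is the opposite deviation $\pi_1(a_1|s_1)=\pi_1(a_1|s_2)=0$ (always play $a_2$): then $(a_1,b_1)$ never occurs, the chain cycles between $s_1$ and $s_2$ avoiding $s_3$, and $V_1^{\vpi_1'\times\vpi_2^{\vsigma}}(\vrho)=-\tfrac16>-\tfrac{3}{10}$. Note these four deviation values ($-\tfrac25$, $-\tfrac16$, $-\tfrac{5}{16}$, $-\tfrac{5}{16}$) already appear in the proof of \Cref{claim-cce-2}, and they carry over unchanged here because player $3$ is a dummy and player $2$'s marginal under $\vsigma$ is uniform, so best-responding to $\vsigma_{-1}$ and to $\vpi_2^{\vsigma}$ are the same MDP. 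Since your plan says to "single out" a profitable deterministic deviation by testing all four, the error is repairable by swapping the witness, but as written the exhibited deviation fails and the proof does not go through.
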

\begin{proof}
    For $\vpi^\sigma = \vpi^{\vsigma}_1 \times \vpi^{\vsigma}_2$ we get,
    \begin{align}
        V^{\vpi^\sigma}_1 &= \vrho^\top \left( \mathbf{I} - \gamma\pr(\vpi^{\vsigma})\right)^{-1}\vr_1(\vpi^{\vsigma})\\
        &= \left(\begin{matrix}\frac{1}{3} & \frac{1}{3} & \frac{1}{3}\end{matrix}\right)
 \left(\begin{matrix}\frac{34}{21} & \frac{20}{21} & \frac{3}{7}\\\frac{20}{21} & \frac{34}{21} & \frac{3}{7}\\\frac{6}{7} & \frac{6}{7} & \frac{9}{7}\end{matrix}\right)
 \left(\begin{matrix}\frac{1}{40}\\\frac{1}{40}\\- \frac{1}{2}\end{matrix}\right)\\
 &=-\frac{3}{10}.
    \end{align}

    But, for the deviation $\vpi_1(a_1|s_1) = \vpi_1(a_1|s_2) = 0$,
    the value funciton of player $1$, is equal to $-\frac{1}{6}$. Hence, $\vpi^\sigma$ is not a NE.
\end{proof}


In conclusion, \Cref{assumption-poymatrix} does not suffice to ensure equilibrium collapse.
\end{example}

\begin{theorem}[No collapse---infinite-horizon]\label{thm:nocollapsing-2}
    There exists a zero-sum polymatrix Markov game (\Cref{assumption-single-contrl} is not satisfied) that has a CCE which does not collapse to a NE.
\end{theorem}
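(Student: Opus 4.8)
The plan is to settle this by exhibiting a single explicit counterexample, namely the infinite-horizon game constructed in \Cref{example-no-collapse-2}. Since the theorem is purely an existence claim, it suffices to produce one zero-sum polymatrix Markov game in which \Cref{assumption-single-contrl} is violated, together with one correlated policy that is a CCE but whose marginalization fails to be a NE. The construction I would use is the $3$-player game on states $s_1,s_2,s_3$ in which players $1$ and $2$ \emph{jointly} steer the transition (so no single controller exists per state, violating switching control), player $3$ is a dummy that merely absorbs the negative of the other two rewards to enforce the zero-sum polymatrix structure of \Cref{assumption-poymatrix}, and the small reward of $\tfrac{1}{20}$ for the ``first'' action in $s_1,s_2$ creates a tension between correlated and independent play.

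First I would fix the correlated policy $\vsigma$ that places mass $\tfrac12$ on each of $(a_1,b_2)$ and $(a_2,b_1)$ in both $s_1$ and $s_2$, and compute its value through the closed form $V_k^{\vsigma}(\vrho)=\vrho^\top(\mathbf{I}-\gamma\pr(\vsigma))^{-1}\vr_k(\vsigma)$; this is exactly the content of \Cref{claim-cce-2}, yielding $V_1^{\vsigma}(\vrho)=V_2^{\vsigma}(\vrho)=-\tfrac{1}{10}$. Next I would verify the CCE inequality. The key simplification is that a unilateral deviator faces a single-agent discounted MDP, which admits a deterministic stationary optimal policy; hence it suffices to evaluate the four deterministic stationary policies of player $1$ and check that each yields value at most $-\tfrac{1}{10}$ (the largest being $-\tfrac16$), with player $2$ following by the symmetry of the construction. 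This establishes that $\vsigma$ is a CCE.

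Then I would compute the marginal product policy $\vpi_1^{\vsigma}\times\vpi_2^{\vsigma}$, which is uniform over each player's two actions in $s_1,s_2$, evaluate its value again through the resolvent formula, and exhibit a single profitable deviation, as in \Cref{claim-not-ne-2}: the value of player $1$ under the marginal is $-\tfrac{3}{10}$, while deviating to always play $a_2$ yields $-\tfrac16>-\tfrac{3}{10}$. This shows the marginal is not a NE, so the CCE does not collapse, and the proof concludes by combining \Cref{example-no-collapse-2,claim-cce-2,claim-not-ne-2}.

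The main obstacle is computational rather than conceptual: the transition matrix $\pr(\vpi)$ depends \emph{bilinearly} on the two controllers' strategies, because it is the joint action $(a_1,b_1)$ that triggers the distinguished transition, so the resolvent $(\mathbf{I}-\gamma\pr(\vpi))^{-1}$ must be inverted carefully for each candidate policy and the reward and discount parameters must be chosen so that a strict gap survives between the CCE value and the best marginal deviation. The delicate conceptual point underlying the calculation is that correlation lets the two players guarantee $(a_1,b_1)$ \emph{never} occurs, keeping them clear of the penalizing dynamics, whereas independent uniform play incurs probability $\tfrac14$ of that joint action at each relevant state; quantifying this gap and confirming it persists after discounting is where the care is needed.
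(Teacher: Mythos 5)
Your proposal is correct and coincides with the paper's own proof: \Cref{thm:nocollapsing-2} is established exactly by \Cref{example-no-collapse-2} together with \Cref{claim-cce-2,claim-not-ne-2}, i.e., the correlated policy supported on $(a_1,b_2)$ and $(a_2,b_1)$ with value $-\tfrac{1}{10}$, the check of the four deterministic stationary deviations of the unilateral deviator's MDP (the best of which attains $-\tfrac{1}{6}$), and the uniform marginal product policy of value $-\tfrac{3}{10}$ beaten by always playing $a_2$. Your closing remark---that correlation lets players $1$ and $2$ avoid the joint action $(a_1,b_1)$ entirely while independent uniform play triggers it with probability $\tfrac14$ per visit---is precisely the mechanism the paper's construction exploits, so there is nothing substantive to add.
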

\begin{proof}
    The proof follows from the game of \Cref{example-no-collapse-2}, and \Cref{claim-cce-2,claim-not-ne-2}.
\end{proof}

\end{document}